\newcommand\method{AdapTrack}
\newcommand{\seqof}[3]{{#1}_{#2},\dots,{#1}_{#3}}
\newcommand{\Pof}[1]{P_\text{#1}}
\newcommand{\myright}{\textcolor{green}{\ding{52}}}
\newcommand{\mywrong}{\textcolor{red}{\ding{56}}}
\newcommand\tfapis{TFv1}
\newcommand\tfreal{TFv1Real}
\newcommand\tfapisimprove{360.87\%}
\newcommand\tfrealimprove{38.93\%}
\newcommand\humanevalimprove{7.84\%}
\newcommand\mbppimprove{6.42\%}
\begin{document}

\title[\method{}: Constrained Decoding without Distorting LLM's Output Intent]{\method{}: Constrained Decoding \\without Distorting LLM's Output Intent}

\author{Yongmin Li}
\email{liyongmin@pku.edu.cn}
\affiliation{%
  \institution{Key Lab of High Confidence Software Technology\\ (Peking University), Ministry of Education;\\ School of Computer Science, Peking University}
  \state{Beijing}
  \country{China}
}

\author{Jia Li}
\email{jia_li@mail.tsinghua.edu.cn}
\affiliation{%
  \institution{College of AI, Tsinghua University}
  \state{Beijing}
  \country{China}
}

\author{Ge Li}
\authornote{Corresponding author.}
\email{lige@pku.edu.cn}
\affiliation{%
  \institution{Key Lab of High Confidence Software Technology\\ (Peking University), Ministry of Education;\\ School of Computer Science, Peking University}
  \state{Beijing}
  \country{China}
}

\author{Zhi Jin}
\email{zhijin@pku.edu.cn}
\affiliation{%
  \institution{Key Lab of High Confidence Software Technology\\ (Peking University), Ministry of Education;\\ School of Computer Science, Peking University}
  \state{Beijing}
  \country{China}
}

\begin{abstract}

  Language model-based code generation and completion tools 
  have been widely adopted, but they may sometimes produce code that does not meet necessary constraints, such as syntactic correctness or API existence.
  Constrained decoding techniques are developed to help the model generate code that adheres to the constraints 
  by greedily eliminating generation options that violate constraints at each step of the generation process.
  However, there is a severe limitation of constrained decoding, that it distorts the model's output intent, forcing it to produce code that may satisfy the constraint but does not match the development intent and is therefore incorrect.
  In response to this challenge, we propose \method{}.
  By incorporating backtracking into the generation process, \method{} avoids distorting the output intent of the language model, thereby producing results that are not only constraint-compliant but also more semantically aligned with model's output intent.
  On our synthetic API completion dataset, \method{} can achieve up to \textbf{\tfapisimprove{}} improvement compared to constrained decoding;
  on the real-world API completion dataset we collect that exhibits similar issues, \method{} can also achieve an improvement of up to \textbf{\tfrealimprove{}} over constrained decoding;
  in general code genration benchmarks, compared to constrained decoding, \method{} can achieve up to \humanevalimprove{} improvement on HumanEval, and up to \mbppimprove{} improvement on MBPP.
  This indicates that, simply by better adhering to the model's output intent, \method{} can achieve significant improvements.
  We provide a theoretical proof that the distribution produced by \method{} aligns with the model's distribution given the generated tokens, thereby ensuring that the model's output intent is not distorted.
  Experiments on domain-specific language problems show that, compared to existing methods, our approach can provide generation results that are more consistent with the language model's distribution.
\end{abstract}

\begin{CCSXML}
<ccs2012>
 <concept>
  <concept_id>00000000.0000000.0000000</concept_id>
  <concept_desc>Do Not Use This Code, Generate the Correct Terms for Your Paper</concept_desc>
  <concept_significance>500</concept_significance>
 </concept>
 <concept>
  <concept_id>00000000.00000000.00000000</concept_id>
  <concept_desc>Do Not Use This Code, Generate the Correct Terms for Your Paper</concept_desc>
  <concept_significance>300</concept_significance>
 </concept>
 <concept>
  <concept_id>00000000.00000000.00000000</concept_id>
  <concept_desc>Do Not Use This Code, Generate the Correct Terms for Your Paper</concept_desc>
  <concept_significance>100</concept_significance>
 </concept>
 <concept>
  <concept_id>00000000.00000000.00000000</concept_id>
  <concept_desc>Do Not Use This Code, Generate the Correct Terms for Your Paper</concept_desc>
  <concept_significance>100</concept_significance>
 </concept>
</ccs2012>
\end{CCSXML}

\received{20 February 2007}
\received[revised]{12 March 2009}
\received[accepted]{5 June 2009}

\setcopyright{none} 
\settopmatter{printacmref=false} 
\renewcommand\footnotetextcopyrightpermission[1]{} 

\maketitle

\section{Introduction}
\label{sec:introduce}

In recent years, language model-based code generation and completion tools, such as GitHub Copilot \cite{copilot}, have become increasingly popular.
These tools can understand the developer's intent from the context and generate code that aligns with their intent by leveraging the capabilities of the language models \cite{codex,alphacode,gpt4,swebench,codegensurvey}.

Legitimate code must satisfy various constraints, such as adhering to syntax rules and ensuring that the called APIs exist.
Language models are designed solely to predict the distribution of the next token \cite{transformer,gpt}, but lack built-in mechanisms to guarantee that the generated code complies with these constraints.
As a result, the generated code, while potentially aligned with the developer's intent, may sometimes be illegitimate.

Constrained decoding methods \cite{picard,synchromesh,mgd,repilot} are developed to tackle such problems.
Using a constrainer to identify tokens illegal to come after already generated text, it simply eliminates the possibility to generate these illegal tokens by directly setting their probability to zero in the model's output distribution.
Such constrainers are often available \cite{mgd,repilot}, because the Integrated Development Environment (IDE) often provides developers with similar functionality during their development.

\begin{figure*}[t]
\centering
\begin{subfigure}{.3\textwidth}
  \centering

\begin{minted}[escapeinside=!!,mathescape=true]{python}
# pytorch 2.0.0
import torch
def matrix_rank(x: torch.Tensor):
    return torch.!$\big|$!
\end{minted}
  \caption{The example code context. We want an API to calculate the rank of the tensor.}
  \label{fig:ex1:context}
\end{subfigure}\hfill
\begin{subfigure}{.35\textwidth}
  \centering
  \begin{tabular}{l|r|c}
  candidates & probability & usable \\
  \hline
  \verb|matrix_rank| & $57.44\%$ & \mywrong{} \\
  \verb|linalg.matrix_rank| & $19.21\%$ & \myright{} \\
  \verb|...| & & \\
  \verb|matrix_power| & $0.20\%$ & \myright{}\\
  \end{tabular}
  \caption{Candidate APIs the model wants to generate, and whether they are usable in PyTorch 2.0.0.}
  \label{fig:ex1:normal_dist}
\end{subfigure}\hfill
\begin{subfigure}{.3\textwidth}
  \centering
  \begin{tabular}{l|r}
  candidates & probability\\
  \hline
  \verb|matrix_power| & $57.70\%$ \\
  \verb|linalg.matrix_rank| & $20.12\%$ \\
  \verb|...| & \\
  \\
  \end{tabular}
  \caption{Candidate APIs in constrained decoding. Compare with Figure~\ref{fig:ex1:normal_dist}.}
  \label{fig:ex1:constrained_dist}
\end{subfigure}
\par\medskip
\begin{subfigure}{\textwidth}
  \centering
  \includegraphics[width=\linewidth]{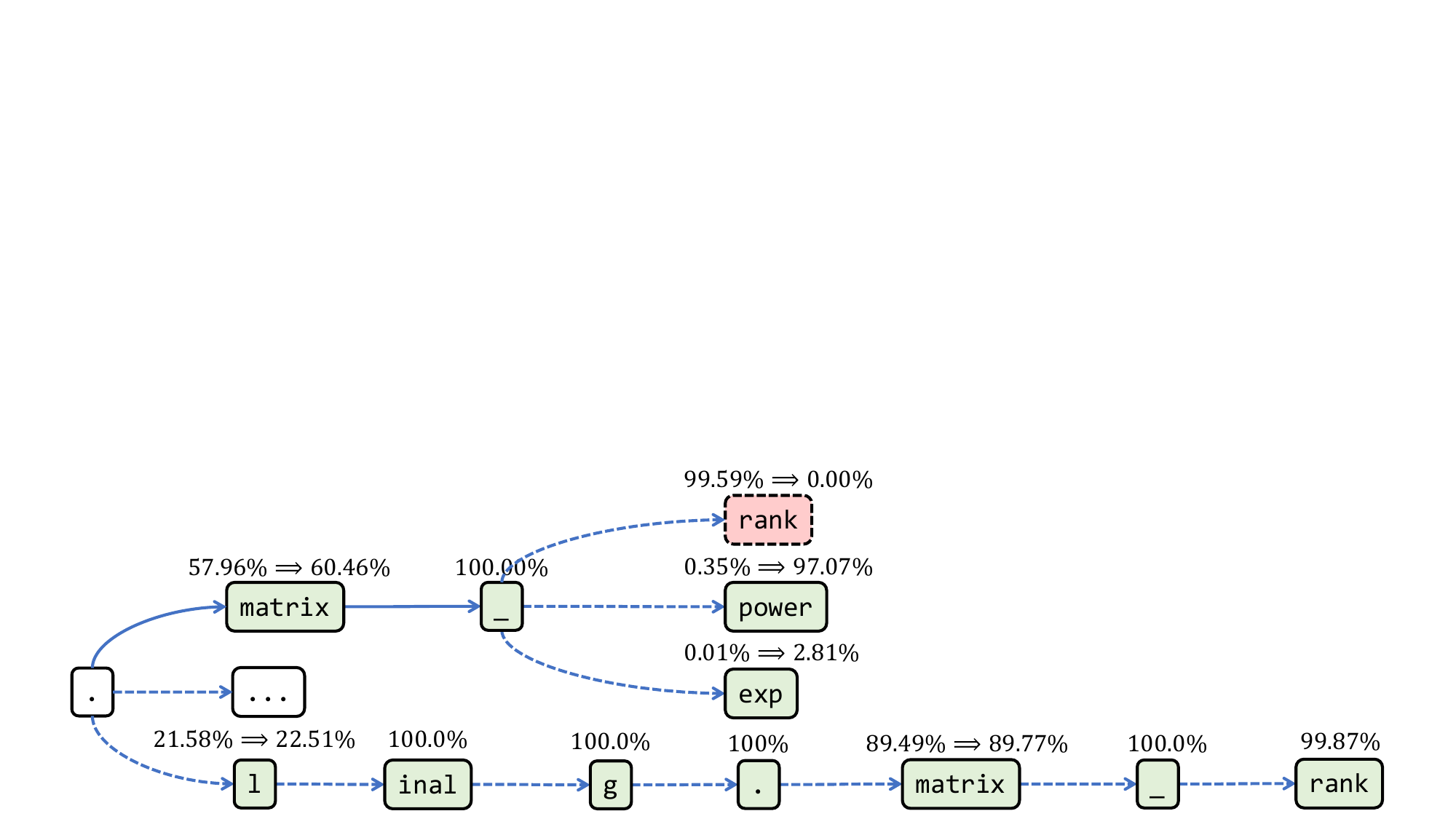}
  \caption{What happened during the constrained decoding.}
  \label{fig:ex1:how}
\end{subfigure}
\caption{An example of incorrect decoding results in constrained decoding.}
\label{fig:ex1}
\end{figure*}

However, there is \textbf{a severe limitation of constrained decoding}\cite{gedi,gad,local-constrain}, that it distorts the output intent of the model and forces the model to produce unnatural code.
In other words, the generated code, while satisfying the constraints, may deviate from the model’s expected functionality and fail to align with the contextual semantics.
Figure \ref{fig:ex1} shows an example to explain this limitation.
The input code context is shown in Figure~\ref{fig:ex1:context}.
The user wants to calculate the rank of a matrix using PyTorch \cite{pytorch} 2.0.
Two APIs can accomplish this task:
\texttt{torch.matrix\_rank} for PyTorch 1.0 to PyTorch 1.8, and  \texttt{torch.linalg.matrix\_rank} for PyTorch after 1.8.
Depending on the user's PyTorch version, the model needs to select the appropriate API.
Figure~\ref{fig:ex1:normal_dist} shows the API names the model intends to output, along with their corresponding probabilities.
Its top candidate is \texttt{matrix\_\-rank}, and second candidate is \texttt{linalg.\-matrix\_\-rank}.
Using a constrainer, it becomes clear that the only legal API in the current version is \texttt{linalg.matrix\_rank}.
Ideally, since we have prohibited the now-illegal \texttt{matrix\_rank} by constrained decoding, the decoding process should most likely output the most possible candidate under constraint, i.e. \texttt{linalg.matrix\_rank}.
However, as shown in Figure~\ref{fig:ex1:constrained_dist}, the API with the highest generation probability in constrained decoding is actually \texttt{matrix\_power} at 57.70\%!
This is entirely unexpected because, without constrained decoding, the generation probability of \texttt{matrix\_power} is only 0.20\%, indicating that the model considers it highly unsuitable for this context.
How does this API suddenly become the top choice under constrained decoding?

Let us examine the constrained decoding process step by step, as illustrated in Figure~\ref{fig:ex1:how}.
In the first time step, the legal options include \texttt{matrix}, \texttt{l}, and several other less possible tokens.
Since there are other valid APIs starting with \texttt{matrix}, such as \texttt{matrix\_power} and \texttt{matrix\_exp}, the constrainer does not eliminate the possibility of generating \texttt{matrix}.
As a result, the model still chooses the token \texttt{matrix} with a probability of 60.46\%.
However, once \texttt{matrix} and the subsequent \texttt{\_} token are generated, we find that the token \texttt{rank}, which accounts for 99.59\% of the model's generation intent, is prohibited by the constrainer.
At this point, since the model has already generated \texttt{matrix\_}, it is forced to continue under constrained decoding, leading it to select the token \texttt{power}, which is entirely misaligned with its output intent.
Consequently, even though the model knows that, aside from \texttt{matrix\_rank}, the most appropriate output here should be \texttt{linalg.matrix\_rank}, constrained decoding compels it to output \texttt{matrix\_power} with a 57.70\% probability -- an answer the model considers highly incorrect.

We observe that, in the process described above, \textbf{the model is ultimately forced to select an API that is entirely misaligned with its generation intent}.
The constrainer can only identify errors one step at a time, so the \texttt{matrix} option remains valid.
However, when we realize that the model's primary intent is thwarted later, we are still compelled to continue generating based on the previously produced tokens, i.e. \texttt{matrix\_}.
As a result, the generated code, while satisfying the constraints, deviates from the model's original output intent.

If a human were writing this code and discovered that the previously used API \texttt{matrix\_rank} was no longer available, but recalled having seen the API \texttt{linalg.matrix\_rank}, they would typically delete the previously written \texttt{matrix\_} and start over again.
Inspired by this behavior, we wonder if a similar mechanism could be introduced into constrained decoding, allowing the model to quickly backtrack to a point where it might have gone wrong and start anew when encountering such difficulties.

Based on this idea, we propose \method{}, which, during the constrained decoding process, dynamically \textbf{backtracks according to the proportion of invalid options among the current next-step choices}.
If this proportion is large, it indicates that the model is unlikely to want to continue generating along this branch and should switch to another branch;
otherwise, it suggests that most of the model's output intent remains on this branch, and we can proceed with generation.
In this way, we address the aforementioned issue by dynamically backtracking to avoid generating results that deviate too far from the model's intent under constraints.

\begin{figure*}[h]
\centering
\begin{subfigure}{.3\textwidth}
  \centering
  \includegraphics[width=\linewidth]{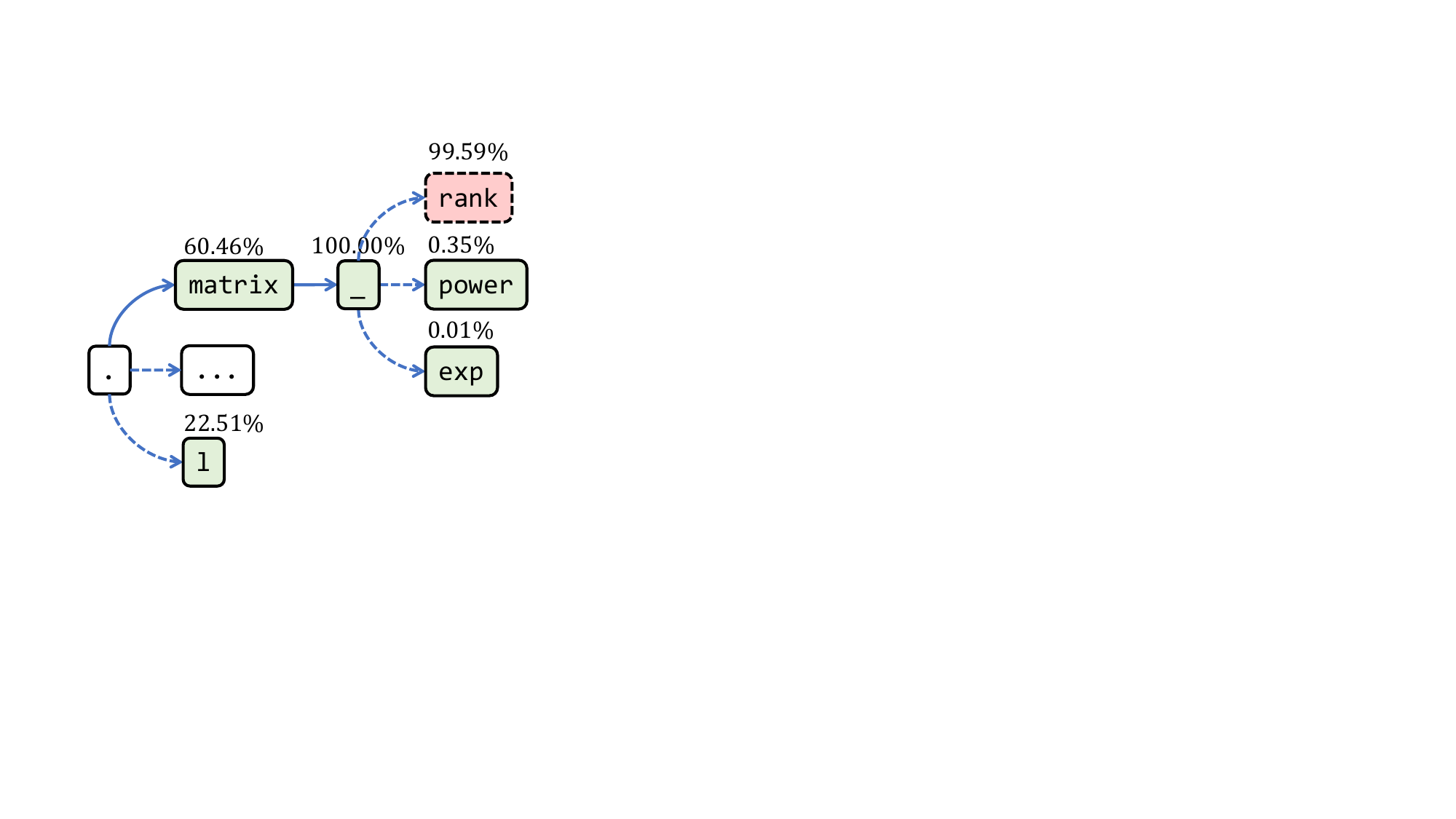}
  \caption{The problematic scene.}
  \label{fig:motiv:1}
\end{subfigure}\hfill
\begin{subfigure}{.31\textwidth}
  \centering
  \includegraphics[width=\linewidth]{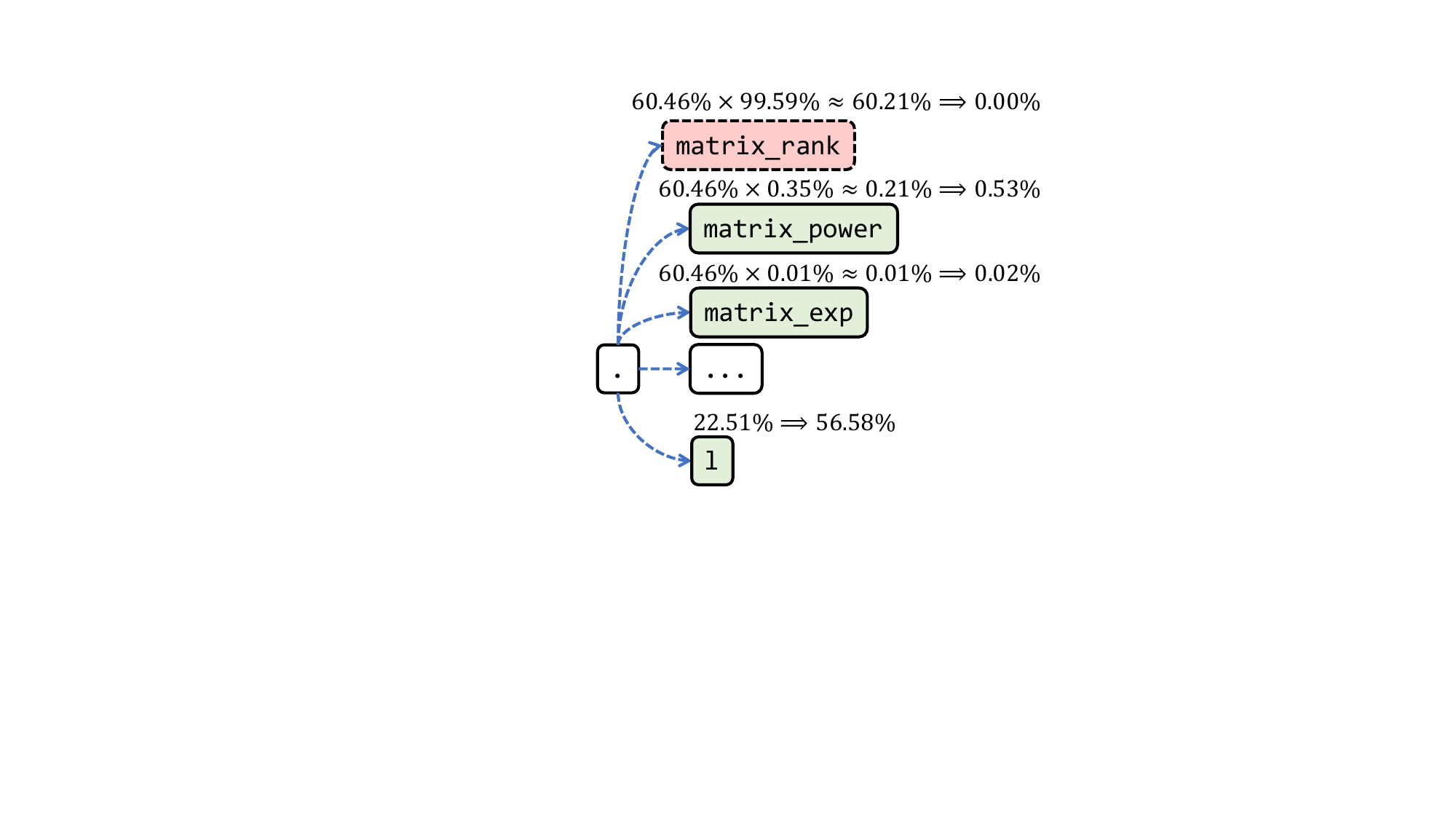}
  \caption{Figure~\ref{fig:motiv:1} is actually equivalent to this.}
  \label{fig:motiv:2}
\end{subfigure}\hfill
\begin{subfigure}{.35\textwidth}
  \centering
  \includegraphics[width=\linewidth]{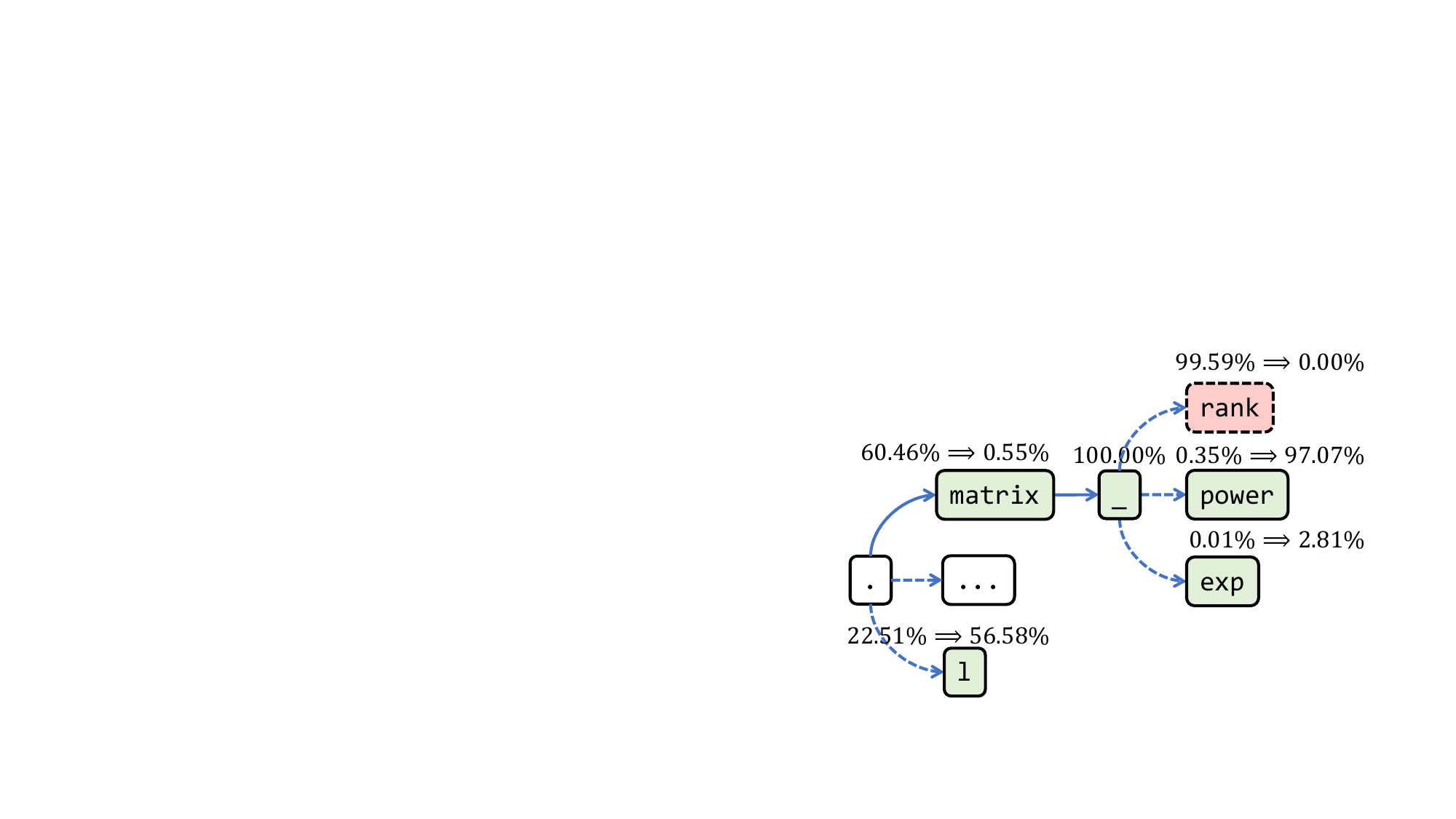}
  \caption{Collect the probabilities in Figure~\ref{fig:motiv:2} gives this.}
  \label{fig:motiv:3}
\end{subfigure}

\begin{subfigure}{.32\textwidth}
  \centering
  \includegraphics[width=\linewidth]{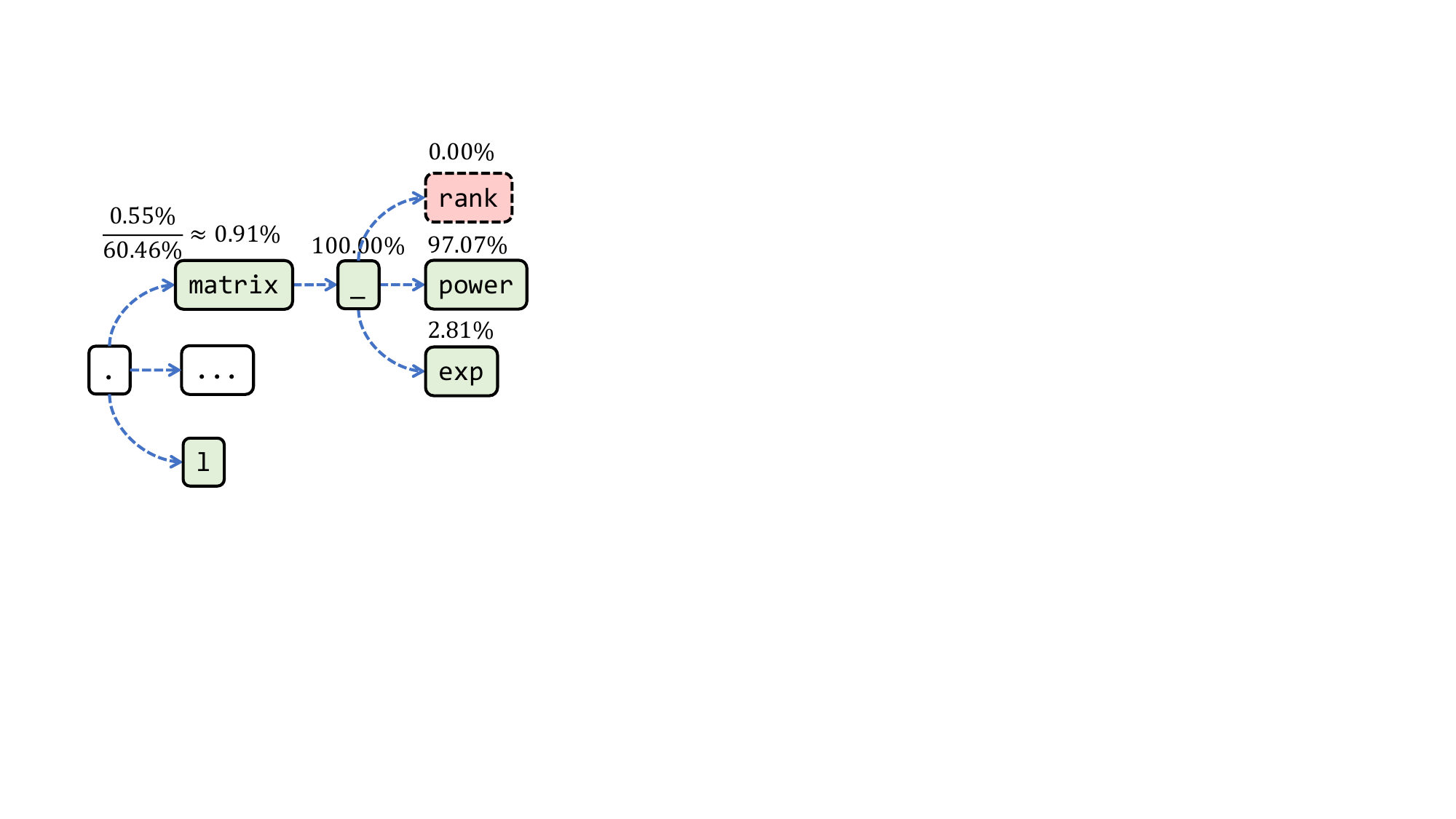}
  \caption{To actually choose the samples the way in Figure~\ref{fig:motiv:3}, we have to do rejection sampling.}
  \label{fig:motiv:4}
\end{subfigure}\quad\quad
\begin{subfigure}{.32\textwidth}
  \centering
  \includegraphics[width=\linewidth]{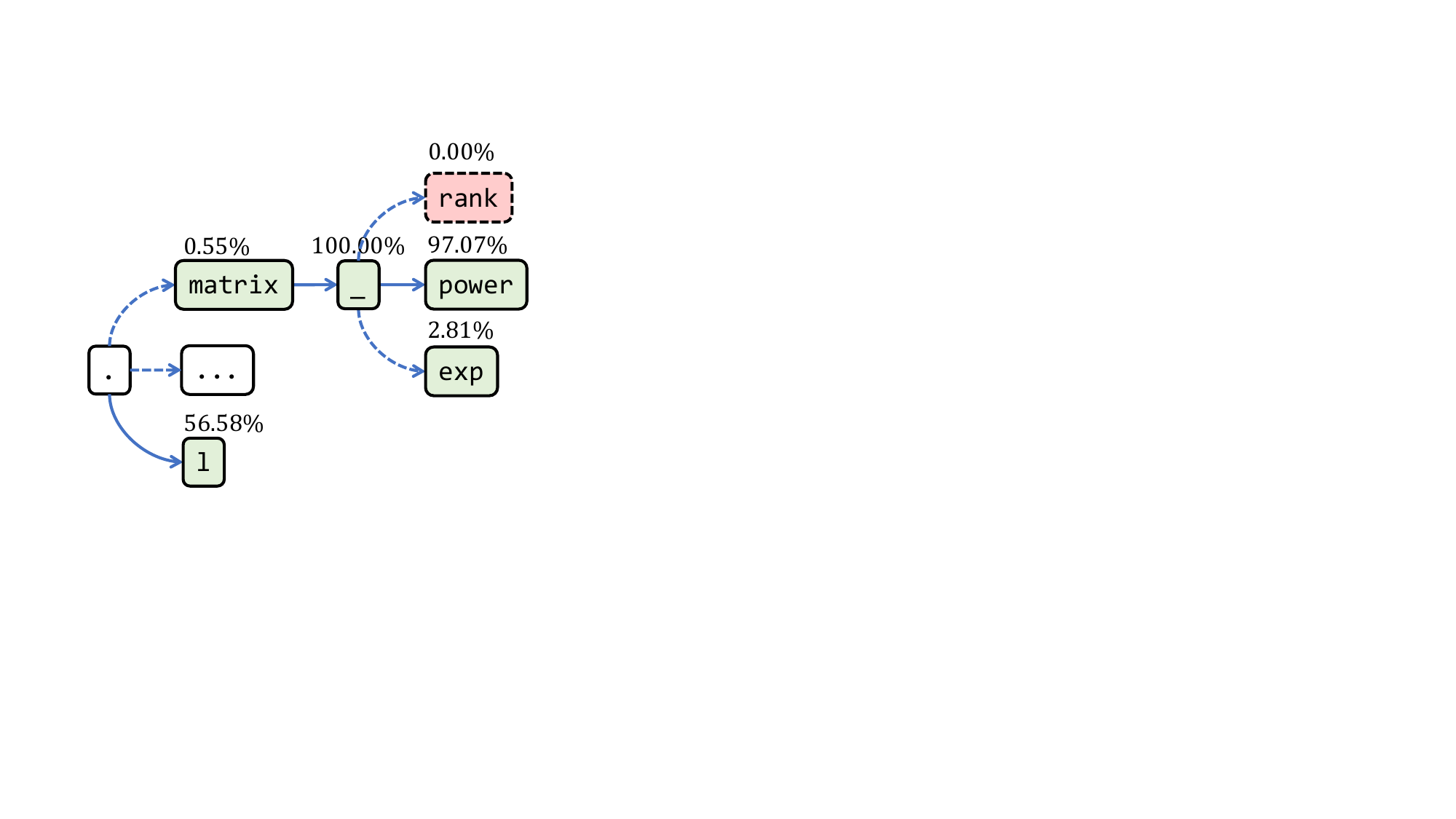}
  \caption{After rejection sampling, the generation probably looks like what we want.}
  \label{fig:motiv:5}
\end{subfigure}%
\caption{Motivating example: how backtracking works on the example in Figure~\ref{fig:ex1}.}
\label{fig:motiv}
\end{figure*}

We provide proof for our algorithm, demonstrating that the distribution it produces is identical to the model's distribution under constraints given the known generation history.
This proves that our method effectively avoids the issue of constrained decoding distorting the model's generation intent, achieving generation that is both constraint-compliant and aligned with the model's intent.

To demonstrate that \method{} does address the issue we raised in Figure~\ref{fig:ex1} -- using unfamiliar APIs correctly with a constrainer,
we construct a synthetic dataset, \tfapis{}, based on the update in TensorFlow \cite{tensorflow} APIs, deliberately prompting the model to use APIs that are unique to TensorFlow v1.
These v1 APIs are still usable in TensorFlow v2, except that they have a longer prefix.
Since the model is unaware of the TensorFlow version, its output intent inherently includes options for both versions.
By using the constrainer, we examine whether the model can generate these APIs compliant with TensorFlow v1 or v2.
The results show that, across multiple code language models, \method{} significantly improves API generation accuracy compared to naive constrained decoding, with an improvement of up to \textbf{\tfapisimprove{}}.

To ensure that these results are not merely effective on this synthetic dataset, we construct a dataset, \tfreal{}, based on real-world usage examples of these APIs on GitHub, and ask the model to complete the API given the preceding context.
The results on this dataset show that \method{} achieves up to \textbf{\tfrealimprove{}} improvement over constrained decoding, once again demonstrating the effectiveness of \method{}.

To verify this issue's presence and \method{}'s generalizability in constrained decoding of general code generation, we conduct experiments on HumanEval \cite{codex} and MBPP \cite{mbpp} with type constrainer \cite{type-constraint}.
The results show that \method{} can achieve up to \textbf{\humanevalimprove{}} improvement on HumanEval, and up to \textbf{\mbppimprove{}} improvement on MBPP, showing the generalizability of \method{}.

To demonstrate that the generation results of our method are indeed more aligned with the model's output intent, we conduct experiments on domain-specific language problems, sampling 2000 results for each problem consecutively and comparing them with 2000 results by previous approaches.
The results show that the distribution produced by our method has a significantly lower KL divergence from the correct distribution, compared to existing methods, indicating that it better aligns with the actual distribution.

We also conduct robustness analysis on temperature and model size to ensure \method{} works with different hyperparameters.
Due to potential concerns of efficiency and scalability, we conduct experiments with limited backtrack distance, showing that a certain short-range backtracking is enough on \tfapis{} and \tfreal{}.

In summary, our contributions are as follows.
\begin{itemize}
  \item We propose \method{}, a constrained decoding method that can avoid distortion of the model's output intent in greedy constrained decoding.
  \item We prove that \method{}'s distribution aligns with the model's distribution given the generated tokens. Experiments on DSL problems show that the distribution produced by \method{} aligns closer with the model’s true intent.
  \item We construct the \tfapis{} and \tfreal{} datasets to simulate scenarios where the model is unfamiliar with APIs. The results demonstrate that our method significantly improves the accuracy of API generation, with an improvement up to \textbf{\tfapisimprove{}} on \tfapis{} and \textbf{\tfrealimprove{}} on \tfreal{}.
  Experiments on HumanEval and MBPP demonstrate the generalizability of \method{} in general code generation.
\end{itemize}

\section{Motivating example}
\label{sec:example}

We have seen the problem of constrained decoding in Figure~\ref{fig:ex1}.
The problem occurs when the decoding process eliminates the probability of one API, but fails to generate another highly possible API with a different prefix.
We propose to introduce backtracking into the decoding process as a potential solution,
and in this section we will delve into a more comprehensive discussion of this approach.

Figure~\ref{fig:motiv} shows step by step how to utilize this intuition to modify the constrained decoding process.
The arrows between tokens represent the generation order.
A solid arrow indicates that the current state of generation includes this step, whereas a dashed arrow signifies other potential generation options.
A red dashed box denotes tokens that are invalid in the current context, and a green solid box represents valid tokens.
The numbers on the tokens indicate their conditional generation probabilities given the preceding context.

Figure~\ref{fig:motiv:1} illustrates the scenario where the issue mentioned in Figure~\ref{fig:ex1} occurs.
At this point, we have generated \texttt{matrix\_}, and subsequently identified the following token options, their validity, and how much the language model wants to generate them.
We observe that the model has a 99.59\% probability to generate the \textbf{invalid} \texttt{rank} token.

We now consider the overall probability of the sequence from the first token up to each candidate option, as shown in Figure~\ref{fig:motiv:2}.
For example, the probability of the model generating \texttt{matrix\_rank} is roughly 60.21\%.
After eliminating invalid options, we can renormalize the probabilities of all remaining valid options.
This process ensures that all valid options experience an equivalent proportional increase in their probabilities.

We can reorganize the probabilities, resulting in Figure~\ref{fig:motiv:3}.
The probability of \texttt{matrix} has decreased from 60.46\% to 0.55\%, which aligns more closely with our intuition that, after learning that the top candidate \texttt{matrix\_rank} is invalid, the model should aim to generate the second candidate, which does not begin with \texttt{matrix}.
This happens because we have performed a global probability adjustment after eliminating the invalid option \texttt{matrix\_rank}, thereby allowing the probability of \texttt{matrix} to correctly decrease.

Now we want to \textbf{backtrack and change the initial choice} of \texttt{matrix}, so that we can follow the output intent of the model to not generate \texttt{matrix} and circumvent the limitation of constrained decoding.
But now comes the question: given that we have already generated \texttt{matrix} with a probability of 60.46\%, how can we adjust this probability to 0.55\%?
One option is to directly resample.
In practice, however, this would imply that we would resample the entire sequence regardless of how minor the new error identified by the constrainer is, which would be wasteful of time and computational resources.
We have an intuition that if the probability change is substantial, we are likely inclined to depart from this branch; otherwise, we probably wish to remain within the same branch.

To formalize this intuition, we propose the use of rejection sampling for \textbf{adaptive backtracking}, as shown in Figure~\ref{fig:motiv:4}.
We have generated \texttt{matrix} with a probability of 60.46\%.
We can perform a second sampling with an acceptance rate of roughly 0.91\%.
If the sampling succeeds, we can continue to use \texttt{matrix}; otherwise, we switch to another token.
Thus, by combining the results of the two sampling processes, we achieve a sampling outcome where \texttt{matrix} is adopted with a probability of 0.55\%.

After this resampling, one possible result is shown in Figure~\ref{fig:motiv:5}, where the first token is changed to \texttt{l}.
This directs us towards the generation of the second candidate, \texttt{linalg\-.matrix\_rank}.

\section{Background}
\label{sec:background}

\subsection{Goal}
We have demonstrated, by example, that the output of constrained decoding can be far away from the model’s intent, which can be shown by the large difference between the probability distribution of constrained decoding and that of the language model.

To keep the output aligned with the model’s intent, the best option is to perfectly align with the probability distribution of the language model while keeping the constraints.
In this subsection, we will formalize this goal.

Given a context $s_{<i}$, an autoregressive language model can produce a distribution over the next token $P(s_i|s_{<i})$.
This allows us to define a distribution over all possible sequences, i.e., the joint probability distribution of the entire sequence,
$$
P(\seqof{s}{0}{n}) = \prod_{i=0}^n P(s_i|s_{<i}).
$$

A constrainer is a function defined on a sequence $c: s \mapsto \{0, 1\}$, where the result is $0$ if all sequences that start with this sequence as their prefix is invalid, and $1$ otherwise.
We use $C(s) = \{t | c(s :: [t]) = 1\}$ to represent all valid tokens after the prefix $s$.

Our task is to define a distribution $P_c$ over all valid sequences, such that the probability of generating any sequence is proportional to its probability under the language model. This can be formally expressed as follows,
$$
P(s | c) \propto P(s) c(s), \forall s \text{ that is complete}.
$$

Or equivalently, for any two valid complete sequences $s$ and $s'$, we should have the following property,
\begin{equation}
\frac{P(s | c)}{P(s' | c)} = \frac{P(s)}{P(s')}. \label{eq:fair}
\end{equation}

\subsection{Constrained decoding and its problem}

Constrained decoding is an ad-hoc modification of rejection sampling, where illegal options are directly removed during the sampling process.
Specifically, instead of repeatedly sampling and discarding invalid sequences, constrained decoding eliminates invalid options at each step, ensuring that only valid tokens are considered for generation.

Consider the generation probability of a valid sequence $ s = (\seqof{s}{0}{n}) $ under constrained decoding. At each step of sampling \( s_i \), the probability can be viewed as a form of rejection sampling, where the probability of sampling \( s_i \) given the prefix \( s_{<i} \) is normalized over all valid continuations. Specifically, the probability of sampling \( s_i \) is given by the following equation,
\begin{equation*}
\begin{split}
\Pof{con}(s_i \mid s_{<i}) &\propto P(s_i \mid s_{<i}) \lBrack s_i \in C(s_{<i}) \rBrack, \\
\Pof{con}(s_i \mid s_{<i}) &= \frac{P(s_i \mid s_{<i})\lBrack s_i \in C(s_{<i}) \rBrack}{\sum_{t \in C(s_{<i})} P(t \mid s_{<i})}.
\end{split}
\end{equation*}

Let us compare the generation probabilities of two valid sequences \( s \) and \( s' \).
\[
\frac{\Pof{con}(s)}{\Pof{con}(s')} = \frac{P(s)}{P(s')} \frac{\prod_{i=0}^{n'} \sum_{t \in C(s'_{<i})} P(t \mid s'_{<i})}{\prod_{i=0}^n \sum_{t \in C(s_{<i})} P(t \mid s_{<i})}.
\]

When compared to the equation~\ref{eq:fair}, the second term in the product, which depends on the sums over \( C(s_{<i}) \) and \( C(s'_{<i}) \), cannot be canceled out. This shows that the generation probabilities of sequences under constrained decoding are distorted relative to their original language model probabilities.

\section{\method}
\label{sec:method}

After defining our goal and necessary notations in section~\ref{sec:background}, we can now describe our algorithm \method{}.
Overall, our algorithm keeps track of the validity of all known incomplete sequences and maintains a sample for each prefix based on the validity information.
Our algorithm repeatedly samples an incomplete sequence and collects the validity information into the sample.
We compare the validity information before and after this sampling and perform rejection sampling to correct the samples for each prefix.
When the rejection happens, our algorithm will next time select a different token on this prefix.
This is equivalent to \textbf{backtracking}, because we no longer use the same partially generation sequence.

Let us start from the beginning.
We want to sample the distribution $P(s|c)$, which can be decomposed as a series of sampling from $P(s_i|c, s_{<i})$ as follows,
$$
P(s|c) = \prod_i P(s_i|c, s_{<i}).
$$

$P(s_i|c, s_{<i})$ can be decomposed by the Bayesian rule as follows,
$$
P(s_i|c, s_{<i}) \propto P(s_i | s_{<i}) P(c|s_{<i} :: [s_{i}]).
$$

For every prefix $x$, we define our current estimation of $P(c | x)$ as $Q[x]$, where $Q$ is an associative array.
We call a prefix $x$ is \textbf{new} if we have not calculated $P(\cdot | x)$; otherwise we call it \textbf{old}.
If the prefix $x$ is invalid, $Q[x]$ should be $0$; otherwise, if $x$ is old, which means we know possible valid sequences starting from $x$, we simply aggregate the information; otherwise, we just assume that its children are always valid and use $1$ as an estimation.
\begin{equation}\label{eq:Q}
Q[x] := \begin{cases}
    c(x) & \text{if } c(x) = 0 \text{ or } x \text{ is new};\\
    \sum_u P(u | x) Q[x :: [u]] & \text{if } c(x) = 1 \text{ and } x \text{ is old}.
\end{cases}
\end{equation}
We use $P_Q(\cdot)$ to denote the distribution of $P(\cdot | c)$ estimated with $Q$,
\begin{equation}\label{eq:P_Q}
    P_Q(s_i | s_{<i}) \propto P(s_i | s_{<i}) Q[s_{<i} :: [s_i]].
\end{equation}

For every prefix $x$ we have encountered during decoding, we maintain a next token $N[x]$ sampled from $P_Q(\cdot|x)$, where $N$ is an associative array.
We can concatenate all the next token samples to get a sample from $P_Q(\cdot)$, as shown in algorithm~\ref{alg:collect_sequence}.

\begin{algorithm}
\caption{Collect a sequence from next tokens}
\label{alg:collect_sequence}
\begin{algorithmic}[1]

\Procedure{CollectSequence}{$N$}
    \State $s \gets \varepsilon$
    \While{$\textsc{Contains}(N, s)$}
        \State $s \gets s :: [N[s]]$
    \EndWhile
    \State \textbf{return} $s$
\EndProcedure
\end{algorithmic}
\end{algorithm}

Our decoding process first samples a prefix $s$ from $P_Q(\cdot)$ as indicated by $N$.
If it is a complete sequence, we can just return it.
Otherwise, we call the constrainer for every possible next token and record their information in $Q$ accordingly.
We sample $N[s]$ from all the valid next tokens by their probability from the language model (the same distribution as $P_Q(\cdot|s)$).
We update $Q[x]$ for all prefixes $x$ of the sequence $s$ and decide \textbf{whether we want to backtrack} at this prefix by updating the prefix's next token $N[x]$.

The estimated validity $Q[x :: [N[x]]]$ of the current next token $N[x]$ decreases during the update because we always overestimate the validity.
This reduces the probability of drawing $N[x]$ from $P_Q(\cdot | x)$, as shown in formula~\ref{eq:P_Q}.
We decide whether we continue to use $N[x]$ as the sample from the updated $P_Q(\cdot | x)$ by rejection sampling.
Denote $Q$ before updating as $Q$, and that after updating as $Q'$.
We sample a Bernoulli distribution with the probability of $P_{Q'}(N[x]|x) / P_Q(N[x] | x)$ to keep the current next token $N[x]$ unchanged, and otherwise change to other tokens \footnote{This probability is different from $Q'[x :: [N[x]]] / Q[x :: [N[x]]]$, because the denominator in $P_Q(\cdot | x)$ is also changed.}.
If we change to another next token, which means \textbf{backtracking occurs}, we sample a new next token from all \textbf{other} tokens with probability $P_{Q}(\cdot | c, x)$, but we explicitly exclude the old $n:=N[x]$.
\begin{equation}\label{eq:P'}    
P'_Q(n' | x, n) \propto P_Q(n' | x) \lBrack n' \neq n \rBrack.
\end{equation}

Collectively, our procedure can be summarized in algorithm~\ref{alg:mine}.

\begin{algorithm}
\caption{\method{}}
\label{alg:mine}
\begin{algorithmic}[1]

\Procedure{BacktrackSample}{$\Pof{LM}, c$}

\State $Q \gets \varnothing$ \Comment{Estimate of $P(c | s)$ for all known $s$}
\State $N \gets \varnothing$ \Comment{Sample of $P_Q(\cdot | s)$ for all known $s$}

\While{\textbf{true}}
    \State $s \gets \textsc{CollectSequence}(N)$
    \If{$s_{-1}$ is complete}
        \State \textbf{return} $s$ \label{step:return}
    \EndIf
    \State $N[s] \sim P_Q(\cdot | s)$ \label{step:N:first_draw}
    \State Update $Q$ by formula~\ref{eq:Q}, denote the updated $Q$ as $Q'$\label{step:updateQ} \par\Comment{Update from the longest sequence to the shortest}
    \For{each prefix $x$ of $s$ excluding $s$ itself}\label{step:rej:start}
        \State $n \gets N[x]$ \Comment{The old $N[x]$}
        \State $p \gets P_Q(n | x)$
        \State $p' \gets P_{Q'}(n | x)$
        \State $b \sim \text{Bernoulli}(p' / p)$ \Comment{Rejection sample} \label{step:rej}
        \If{$b = 0$} \Comment{Backtrack}
            \State $N[x] \sim P'_Q(\cdot | x, n)$ as formula~\ref{eq:P'}
        \EndIf
    \EndFor\label{step:rej:end}
\EndWhile

\EndProcedure
\end{algorithmic}
\end{algorithm}

\section{The proof of \method{}}
\label{sec:proof}

\method{} focuses on not distorting the model distribution under constraint and, therefore, not distorting its output intent.
We now give a proof of this statement.

\begin{theorem}
    Algorithm~\ref{alg:mine} only returns a valid complete sequence, and the probability of generating different sequences that have been explored during decoding satisfies equation~\ref{eq:fair}.
\end{theorem}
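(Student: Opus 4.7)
My plan has two pieces, matching the two claims of the theorem. Validity is the short half: by induction on exploration, any newly encountered prefix $x :: [u]$ with $c(x::[u]) = 0$ receives $Q[x::[u]] = 0$ through the base case of formula~\ref{eq:Q}, hence $P_Q(u|x) = 0$, so neither the draw on line~\ref{step:N:first_draw} nor any resample from $P'_Q$ can pick an invalid extension. When the algorithm returns $s$ on line~\ref{step:return}, the $N$-chain from $\varepsilon$ to $s$ has $Q > 0$ at every node, so $c = 1$ on every prefix of $s$, including $s$ itself.

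For fairness I would argue inductively on the iteration count. Let $\sigma^{(t)}$ denote the sequence produced by \textsc{CollectSequence} at the end of iteration~$t$ and let $Q^{(t)}$ be the current estimate. The invariant I plan to carry is: conditional on the exploration history through iteration~$t$, the joint law of the stored entries of $N$ is the one under which the virtual sequence obtained by following $N$ from $\varepsilon$ is distributed as $P_{Q^{(t)}}(\cdot)$. In particular, $\Pr[\sigma^{(t)} = s^*] = P_{Q^{(t)}}(s^*)$ for every explored $s^*$; and for a complete valid $s^*$ we have $Q^{(t)}[x] = c(x) = 1$ on every prefix $x$ of $s^*$, so the return probability is proportional to $P(s^*) c(s^*)$, which gives equation~\ref{eq:fair} after taking ratios.

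The inductive step has two substeps: the extension on line~\ref{step:N:first_draw} draws a fresh $N[s]$ from $P_Q(\cdot|s)$ and trivially pushes the invariant one level deeper, while the rejection sweep in lines~\ref{step:rej:start}--\ref{step:rej:end} must convert the conditional law of ancestor entries from $P_{Q^{(t-1)}}$ to $P_{Q^{(t)}}$. Here I would verify two things: that the acceptance ratio $P_{Q'}(n|x)/P_Q(n|x)$ lies in $[0,1]$, which follows from a short calculation using that $Q'[x::[n]]$ is the only changed entry and satisfies $Q'[x::[n]] \le Q[x::[n]]$ while the normalizer decreases by the matched amount; and that accepting with that probability plus resampling from $P'_Q(\cdot|x,n)$ on rejection reproduces $P_{Q'}(\cdot|x)$ as the new conditional law. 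Processing prefixes from longest to shortest keeps the per-level updates consistent.

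The main obstacle is that a naive version of the invariant is false: asking the marginal of $N[x]$ to equal $P_{Q^{(t)}}(\cdot|x)$ unconditionally does not hold, because within a single execution branch the marginal of $N[x]$ generally differs from the $P_{Q^{(t)}}$ induced by that branch's own $Q$, and equality only emerges after averaging across branches that made different earlier choices. The invariant therefore has to be stated conditional on the exploration history, and the per-prefix verification must exploit the fact that the current $N[x]$ is always the on-path token, so exactly one coordinate of the $Q$ vector at that level has been modified; this reduces the algebraic check to a two-way identity, which I would then propagate up the tree to complete the inductive step.
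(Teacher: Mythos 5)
Your overall structure matches the paper's: a short validity argument, a key lemma showing $N[x] \sim P_Q(\cdot \mid x)$ maintained by rejection sampling (the paper's Theorem~\ref{thm:updateN}), and then a conversion of $P_Q$ into the target proportionality. Your remarks on the need to state the rejection-sampling invariant conditional on exploration history are a point the paper glosses over and are a genuine strengthening of the argument.

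However, the final step of your fairness argument contains a false claim. You assert that for a returned complete valid $s^*$ one has $Q^{(t)}[x] = c(x) = 1$ on every prefix $x$ of $s^*$. This is not true for the intermediate prefixes. When $x$ is a proper prefix of the returned sequence, $x$ is \emph{old} (the call to \textsc{CollectSequence} traversed it, so $N$ contains $x$), and by formula~\ref{eq:Q} we have $Q[x] = \sum_u P(u \mid x)\, Q[x::[u]]$, which is strictly less than $1$ whenever some continuation $u$ has been discovered to be invalid. In fact, the whole point of $Q$ is to record exactly this deficit; if $Q[x]$ were always $1$ on the current path there would be no backtracking. The correct way to reach the proportionality is the telescoping computation the paper uses:
\begin{equation*}
P_Q(s) \;=\; \prod_{i} \frac{P(s_i \mid s_{<i})\, Q[s_{\leq i}]}{Q[s_{<i}]} \;=\; \frac{Q[s]}{Q[\varepsilon]}\, P(s),
\end{equation*}
together with the observation that the returned $s$ is \emph{new} (the loop in \textsc{CollectSequence} stopped because $N$ has no entry for $s$), so $Q[s] = c(s) = 1$. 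The intermediate $Q[s_{\leq i}]$ values cancel in the product regardless of whether they equal $1$; only $Q[s]$ and $Q[\varepsilon]$ survive, and $Q[\varepsilon]$ is a common normalizer independent of the particular explored sequence, yielding equation~\ref{eq:fair}. Your wrong intermediate claim happens to produce the right final proportionality by coincidence, but the justification needs to be replaced by this telescoping argument.
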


\begin{proof}
    The only \textbf{return} statement is step~\ref{step:return}, so the returned sequence is always complete.
    Also, the returned sequence is collected from $N$, where $N[x] \sim P_Q(\cdot | x) \propto P(\cdot | x) Q[x :: [\cdot]]$ always guarantees that $N[x]$ is a valid subsequent token for any prefix $x$, so the algorithm always returns valid well-ended sequences.

    Collecting one sample sequence from $N$ is equivalent to sampling a result from $P_Q(\cdot)$ because we always maintain this property in the algorithm by rejection sampling, as shown in theorem~\ref{thm:updateN}.

    For any new sequence $s = (\seqof{s}{0}{n})$,
    \begin{equation*}
    \begin{split}
        P_Q(s) &= \prod_{i=0}^n P_Q(s_i | s_{<i})
        = \prod_{i=0}^n \frac{P(s_i | s_{<i}) Q[s_{<i} :: [s_i]]}{\sum_u P(u | s_{<i}) Q[s_{<i} :: [u]]}, \\
        &= \prod_{t=0}^n \frac{P(s_i | s_{<i}) Q[s_{\leq i}]}{Q[s_{<i}]}
        = \frac{Q[s]}{Q[\varepsilon]} P(s)
        = \frac{1}{Q[\varepsilon]} P(s). \qedhere
    \end{split}
    \end{equation*}
    
\end{proof}

\begin{theorem}\label{thm:updateN}
    The update process of $N$ in steps~\ref{step:rej:start} to \ref{step:rej:end} in algorithm~\ref{alg:mine} makes $N[x]$ follows distribution $P_Q(\cdot | x)$ for prefix $x$.
\end{theorem}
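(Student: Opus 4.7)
The plan is to prove this by induction on the iterations of the while loop in algorithm~\ref{alg:mine}, maintaining the invariant that for every prefix $x$ currently in the keys of $N$, the sample $N[x]$ is distributed according to $P_Q(\cdot \mid x)$ with respect to the current state of $Q$.

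For the base case, I would note that whenever a fresh prefix enters $N$ at step~\ref{step:N:first_draw}, it is drawn directly from $P_Q(\cdot \mid s)$, so the invariant holds by construction at insertion time. For prefixes $x$ that are neither equal to $s$ nor lie on the current sampled path, neither $N[x]$ nor the entries of $Q$ involved in forming $P_Q(\cdot \mid x)$ are modified during the iteration, so the invariant is preserved automatically and one only needs to argue about prefixes $x$ that the update actually touches.

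The main work is then the inductive step for each proper prefix $x$ of $s$, where formula~\ref{eq:Q} shrinks $Q[x :: [n]]$ to $Q'[x :: [n]]$ (with $n = N[x]$) while leaving $Q[x :: [u]]$ unchanged for every $u \neq n$, because only the branch that was actually explored contributes a refined estimate. Before analyzing the rejection step, I would verify from this single-coordinate update structure and formula~\ref{eq:P_Q} that $P_{Q'}(n \mid x) \leq P_Q(n \mid x)$, so the Bernoulli parameter $p'/p$ at step~\ref{step:rej} is well-defined in $[0,1]$ and the rejection sampling is not ill-posed.

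The central computation is to check that after the rejection step in formulas~\ref{eq:P'} and step~\ref{step:rej}, the law of $N[x]$ becomes $P_{Q'}(\cdot \mid x)$. For the component at $t = n$ the answer falls out of the acceptance rule directly, giving mass $P_Q(n \mid x) \cdot (p'/p) = P_{Q'}(n \mid x)$. For $t \neq n$ the mass arrives only through the reject-and-resample branch drawing from $P'_Q(\cdot \mid x, n)$, and the identity to check is that this contribution matches $P_{Q'}(t \mid x)$. I expect this algebraic identity to be the main technical obstacle: it hinges on the precise relation between the normalizers of $P_Q(\cdot \mid x)$ and $P_{Q'}(\cdot \mid x)$, which differ by exactly $P(n \mid x) \cdot (Q[x :: [n]] - Q'[x :: [n]])$, so that the $1 - P_Q(n \mid x)$ in the normalizer of $P'_Q$ should cancel neatly against the factor $1 - p'/p$ coming from the Bernoulli. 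Once this local identity is in place, processing prefixes from longest to shortest as in the algorithm propagates the invariant up the chain of ancestors of $s$, closing the induction.
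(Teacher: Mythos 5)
Your high-level strategy matches the paper's: induction on the updates of $N$, the observation that the $Q$-update touches only the coordinate $Q[x::[n]]$ with $n=N[x]$ while leaving $Q[x::[u]]$ unchanged for $u\neq n$, the check that the Bernoulli parameter $p'/p$ lies in $[0,1]$, and the rejection/resample decomposition. However, your central accounting step has a gap that makes the claimed identity false.

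You write that for a token $t\neq n$ ``the mass arrives only through the reject-and-resample branch,'' i.e.\ that
$P[\text{new }N[x]=t]=P_Q(n\mid x)\bigl(1-\tfrac{p'}{p}\bigr)\,P'_Q(t\mid x,n)$.
But your treatment of the $t=n$ case is a marginal computation — you weight by the prior $P_Q(n\mid x)$ rather than conditioning on old $N[x]=n$ — and once you are computing marginal probabilities you must also account for the event that $N[x]$ was already $t$ before the update, which contributes a further $P_Q(t\mid x)$ of mass. Concretely, your $t\neq n$ formula simplifies to
$\bigl(P_Q(n\mid x)-P_{Q'}(n\mid x)\bigr)\dfrac{P_Q(t\mid x)}{1-P_Q(n\mid x)}=P_{Q'}(t\mid x)-P_Q(t\mid x)$,
which is not $P_{Q'}(t\mid x)$, and summing your claimed masses over all $t$ gives $P_Q(n\mid x)$ rather than $1$. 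The paper's proof includes the missing term: the probability of ending at $n'\neq n$ is
$P_Q(n'\mid x)+P_Q(n\mid x)\bigl(1-\tfrac{p'}{p}\bigr)\dfrac{P_Q(n'\mid x)}{1-P_Q(n\mid x)}$,
and only then does the algebra — using the single-coordinate update and the relation between the two normalizers you correctly identified — collapse to $P_{Q'}(n'\mid x)$. To repair your proposal you need to be consistent about whether you are conditioning on $n$ or marginalizing over it, and if marginalizing, you must include the mass from the case where $N[x]$ was already the target token.
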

\begin{proof}
    We give a proof by induction on the update of $N[x]$.
    
    \paragraph{Base case}
    In step~\ref{step:N:first_draw}, $N[x]$ is drawn from $P_Q(\cdot | x)$.

    \paragraph{Induction step}
    When the distribution $Q[x]$ is updated in step~\ref{step:updateQ}, 
    denote the original $N[x]$ by $n$.

    By rejection sampling with an acceptance rate of $P_{Q'}(n | x) / P_Q(n | x) \leq 1$, the total probability of generating $n$ as a sample after the prefix $x$ is as follows.
    $$
    P_{Q}(n | x) \times \frac{P_{Q'}(n | x)}{P_Q(n | x)} = P_{Q'}(n | x).
    $$

    The probability of generating another token $n' \neq n$ as a sample after the prefix $x$ is as follows.
    \begin{equation*}
    \begin{split}
        &P_Q(n' | x) + P_Q(n | x) \times \left(1 -  \frac{P_{Q'}(n | x)}{P_Q(n | x)}\right)\times \frac{P_Q(n' | x)}{1 - P_Q(n | x)}, \\
        = &(1 - P_{Q'}(n|x)) \times \frac{P_Q(n' | x)}{1 - P_Q(n | x)},\\
        = &(1 - P_{Q'}(n|x)) \times \frac{P(n'|x) Q[x :: [n']]}{\sum_{u \neq n} P(u|x) Q[x :: [u]]},\\
        = &(1 - P_{Q'}(n|x)) \times \frac{P(n'|x) Q'[x :: [n'])]}{\sum_{u \neq n} P_\text{LM}(u|x) Q'[x :: [u]]},\\
        = &(1 - P_{Q'}(n|x)) \times \frac{P_{Q'}(n' | x)}{1 - P_{Q'}(n | x)}
        = P_{Q'}(n' | x). \qedhere
    \end{split}
    \end{equation*}
\end{proof}

\section{Evaluation}
\label{sec:experiments}

We have proven \method{} aligns with the language model in section~\ref{sec:proof}.
In this section, we evaluate \method{} against other decoding methods to answer the following research questions:

\paragraph{RQ1} How does \method{} perform in predicting correct and valid APIs?

\paragraph{RQ2} How does \method{} perform in predicting correct and valid APIs in real-world scenarios?

\paragraph{RQ3} How does \method{} perform in general code generation?

\paragraph{RQ4} How does \method{} align with the oracle distribution of language models under constraint?

\subsection{RQ1: Performance of \method{} in predicting correct and valid APIs}

\begin{table*}[t]
\centering
\caption{Exact match on \tfapis{} in different TensorFlow versions.}
\begin{tabular}{l|rrrrr|rrrrr}
\hline
& \multicolumn{5}{c|}{TensorFlow v1 setting} & \multicolumn{5}{c}{TensorFlow v2 setting}\\
Method & EM@1 & EM@3 & EM@5 & EM@10 & EM@20 & EM@1 & EM@3 & EM@5 & EM@10 & EM@20\\
\hline
\textit{Qwen2.5 Coder 7B} & & & & & & & \\
Unconstrained decoding & 61.58\% & 73.07\% & 81.51\% & 88.85\% & 92.84\% & 7.16\% & 19.04\% & 26.17\% & 37.77\% & 50.60\% \\
Constrained decoding & 66.11\% & 76.35\% & 83.50\% & 90.10\% & 94.03\% & 18.38\% & 31.84\% & 39.63\% & 50.34\% & 60.86\% \\
\method{} (Ours) & \textbf{80.19\%} & \textbf{87.36\%} & \textbf{91.86\%} & \textbf{95.96\%} & \textbf{98.33\%} & \textbf{54.89\%} & \textbf{62.51\%} & \textbf{68.87\%} & \textbf{76.52\%} & \textbf{82.82\%} \\
\hline
\textit{DeepSeek Coder Base 6.7B} & & & & & & & \\
Unconstrained decoding & 62.53\% & 68.63\% & 78.50\% & 88.43\% & 94.27\% & 5.97\% & 16.91\% & 24.12\% & 36.49\% & 50.60\% \\
Constrained decoding & 65.63\% & 73.34\% & 82.03\% & 90.26\% & 94.75\% & 10.98\% & 25.31\% & 33.32\% & 45.90\% & 59.43\% \\
\method{} (Ours) & \textbf{83.29\%} & \textbf{86.98\%} & \textbf{92.47\%} & \textbf{96.48\%} & \textbf{98.09\%} & \textbf{50.60\%} & \textbf{56.62\%} & \textbf{64.76\%} & \textbf{74.57\%} & \textbf{82.82\%} \\
\hline
\textit{StarCoder2 7B} & & & & & & & \\
Unconstrained decoding & 72.79\% & 76.37\% & 85.12\% & 92.30\% & 96.18\% & 1.67\% & 12.66\% & 18.62\% & 29.76\% & 44.15\% \\
Constrained decoding & 73.99\% & 79.99\% & 87.36\% & 93.23\% & 96.42\% & 10.26\% & 21.48\% & 28.83\% & 40.48\% & 53.70\% \\
\method{} (Ours) & \textbf{83.53\%} & \textbf{88.78\%} & \textbf{92.97\%} & \textbf{96.36\%} & \textbf{98.33\%} & \textbf{45.35\%} & \textbf{50.90\%} & \textbf{59.70\%} & \textbf{69.75\%} & \textbf{77.57\%} \\
\hline
\textit{CodeLlama Python 7B} & & & & & & & \\
Unconstrained decoding & 59.90\% & 70.63\% & 78.86\% & 86.61\% & 91.41\% & 11.93\% & 25.20\% & 32.52\% & 43.46\% & 55.37\% \\
Constrained decoding & 66.35\% & 76.36\% & 83.47\% & 89.78\% & 93.56\% & 21.24\% & 38.35\% & 46.12\% & 56.96\% & 67.54\% \\
\method{} (Ours) & \textbf{83.29\%} & \textbf{89.37\%} & \textbf{92.82\%} & \textbf{95.70\%} & \textbf{97.14\%} & \textbf{60.86\%} & \textbf{68.20\%} & \textbf{73.37\%} & \textbf{79.56\%} & \textbf{85.20\%} \\
\hline
\end{tabular}
\label{tab:tf1-apis}
\end{table*}

As the library evolves, it often happens that an old API is transitioned into another new API.
If the user has updated their library, only the new APIs should be used; if not, only the old APIs should be used.
In RQ1, we aim to simulate this scenario.
We ask the model to use the same API with the old and the new versions of the same library, respectively.
The APIs used by the model should not only \textbf{have the correct semantics}, but also \textbf{be valid in the corresponding library version}.
With the new library, the model should adopt the new API; with the old one, it should stick to the old API.

\begin{figure}[h]
\centering

\begin{minted}[escapeinside=!!,mathescape=true,frame=lines]{python}
import tensorflow as tf

def mean_pairwise_squared_error(*args, **kwargs):
    """
    wrapper for losses.mean_pairwise_squared_error
    """
    return tf!$\big|$!
\end{minted}
  \caption{An example code in the \tfapis{} dataset.}
  \label{fig:tf-old:ex}
\end{figure}

\paragraph{Dataset}
We construct a dataset, \tfapis{}, based on the update from TensorFlow v1 to v2.
We prompt the model to generate an API unique to v1 in a context where it cannot infer the library version and then evaluate its ability to correctly generate the API \textbf{in both v1 and v2 settings}.
The version information is given to the constrainer, so the constrainer is aware of all the valid APIs in the setting.
To guide the model toward using a specific API, we explicitly annotate the function as a wrapper, encouraging the model to directly forward calls to this API.
An example is illustrated in Figure~\ref{fig:tf-old:ex}.
For this example, the correct answer is \texttt{.losses.\-mean\_\-pairwise\_\-squared\_\-error} and \texttt{.compat.\-v1.\-losses.\-mean\_\-pairwise\_\-squared\_\-error} for Tensor\-Flow v1 and v2 settings, respectively.

We collect all legal APIs under TensorFlow 2.16 from the official TensorFlow website.
From them, we filter all APIs starting with \texttt{tf.\-compat.\-v1}.
We exclude APIs that have counterparts in v2 based on the last part of the API name.
To avoid interference from classes and methods, we retain only APIs where all parts of its name start with a lowercase letter.
To avoid interference from subpackages, we remove all APIs that can be a prefix of another API.
Ultimately, we obtained 419 standalone TensorFlow v1-specific APIs.

\paragraph{Metrics}
We use exact match (EM) metrics.
We report EM@$k$ \cite{codex}, 
\begin{equation}
\text{EM@}k := \mathbb{E}_{\text{Problems}}\left[1 - \frac{\binom{n - c}{k}}{\binom{n}{k}}\right], \label{eq:em_at_k}
\end{equation}
where $n$ is the sample size, $c$ is how many samples exactly match the oracle in $n$ samples, and $k$ is in 1, 3, 5, 10, 20.
EM@$k$ reflects the expected value of achieving an exact match within $k$ attempts.

For each method on each problem in each setting, we sample one result by greedy decoding and then sample 20 results randomly.
For EM@1, we use the greedy result, and the sample size is 1.
For EM@$k$ where $k > 1$, we calculate the metrics from the 20 samples.
Because unconstrained decoding does not depend on the version information, we reuse the same sampling results and only change the oracle between the two settings.

\paragraph{Baselines}
We compare with two baselines, unconstrained decoding and constrained decoding.
Unconstrained decoding is decoding without a constrainer.
At each time step, the next token is sampled from the model distribution without filtering.
Constrained decoding is the decoding method we describe in section~\ref{sec:background}.
At each time step, the probabilities of invalid tokens are set to 0 in the model distribution before sampling.

\paragraph{Implementation}
For the TensorFlow v2 setting, we use all collected TensorFlow APIs as valid APIs in the constrainer.
For TensorFlow v1 setting, we use all the v1 APIs as valid APIs in the constrainer, but we remove their additional prefix \texttt{compat\.v1}, to keep the same syntax as in TensorFlow v1.
We implement the \method{} algorithm with PyTorch\cite{pytorch} and Transformers\cite{transformers} library.
We conduct experiments on a 64-core Intel Xeon machine with 8 NVIDIA RTX A6000 GPUs.

\paragraph{Models}
We conduct experiments on Qwen2.5 Coder 7B\cite{qwen2.5coder}, Deep\-Seek Coder Base 6.7B\cite{deepseekcoder}, Star\-Coder2 7B\cite{starcoder2} and Code\-Llama Python 7B\cite{codellama}.
We only use base models in this experiment, because the task of code completion aligns better with the training target of the base models than that of the instruct models.

\paragraph{Results}
As shown in Table~\ref{tab:tf1-apis},
\textbf{for both TensorFlow v1 and v2 settings, \method{} performs significantly better than constrained decoding, with an improvement up to \tfapisimprove{}.}

In the TensorFlow v1 setting, Qwen2.5 Coder 7B can already achieve an EM@1 of 61.58\%.
If we equip it with a constrainer, the result can be improved to 66.11\%, indicating the model may still generate some API calls invalid in TensorFlow v1 and eliminated by the constrainer.
But using \method{}, the result can be further improved to 80.19\%, indicating that, in roughly 14\% cases, the model considers the oracle as its best candidate under constraint but fails to generate it, because constrained decoding does not allow the model to modify already generated tokens.
Since \method{} does allow the model to modify them to better express its output intent, \method{} performs significantly better than simple constrained decoding.

It is more dramatic in TensorFlow v2.
Without a constrainer, Qwen2.5 Coder 7B can only achieve an EM@1 of 7.16\%.
This result should be considered together with EM@1 in the TensorFlow v1 environment.
For 61.58\% cases, the model wants to generate v1-compliant APIs as the best candidate, and then it cannot generate v2-compliant APIs by unconstrained decoding.
If we add a constrainer, this result can be improved to 18.38\%, because APIs that are invalid in TensorFlow v2 cannot be generated now.

Now consider the results of \method{}.
With the help of the constrainer to remove invalid APIs, the model can now freely generate the best remaining option in its candidates, resulting in an EM@1 of 54. 89\%, which is an improvement of \textbf{198. 70\%} over the 18. 38\% EM@1 of constrained decoding.
This is even better than EM@10 of constrained decoding, further showing the importance of not distorting the language model’s output intent.

Similar analysis can be applied to the results of other models, while the biggest improvement, \textbf{\tfapisimprove{}} compared to constrained decoding, happens at EM@1 of DeepSeek Coder Base 6.7B in TensorFlow v2 environment, from 10.98\% to 50.60\%.

\subsection{RQ2: Performance of \method{} in real-world API completion}

\begin{table}[t]
\centering
\caption{Exact match on \tfreal{}.}
\begin{tabular}{l|rrrrr}
\hline
& \multicolumn{5}{c}{Exact Match @ $k$ ($\uparrow$)} \\
Method & EM@1 & EM@3 & EM@5 & EM@10 & EM@20\\
\hline
\textit{Qwen2.5C 7B} \\
Unconstrained & 21.70\% & 29.45\% & 35.28\% & 43.88\% & 53.00\% \\
Constrained & 35.60\% & 41.86\% & 47.64\% & 55.29\% & 62.30\% \\
\method{} & \textbf{49.30\%} & \textbf{53.87\%} & \textbf{59.28\%} & \textbf{65.84\%} & \textbf{71.70\%} \\
\hline
\textit{DSC Base 6.7B} \\
Unconstrained & 18.90\% & 25.62\% & 30.93\% & 38.66\% & 47.10\% \\
Constrained & 31.10\% & 38.45\% & 44.09\% & 51.23\% & 57.70\% \\
\method{} & \textbf{41.90\%} & \textbf{49.45\%} & \textbf{55.26\%} & \textbf{62.25\%} & \textbf{68.30\%} \\
\hline
\textit{SC2 7B} \\
Unconstrained & 15.90\% & 21.91\% & 27.33\% & 35.59\% & 44.70\% \\
Constrained & 29.80\% & 35.55\% & 41.69\% & 49.59\% & 56.60\% \\
\method{} & \textbf{41.40\%} & \textbf{47.60\%} & \textbf{53.85\%} & \textbf{61.62\%} & \textbf{68.10\%} \\
\hline
\textit{CL Python 7B} \\
Unconstrained & 19.60\% & 26.88\% & 32.26\% & 40.48\% & 49.40\% \\
Constrained & 34.40\% & 40.77\% & 46.99\% & 55.10\% & 62.30\% \\
\method{} & \textbf{46.80\%} & \textbf{52.61\%} & \textbf{58.75\%} & \textbf{66.41\%} & \textbf{73.00\%} \\
\hline
\end{tabular}
\label{tab:tf1-apis-real}
\end{table}

We have found that \method{} is extremely effective on \texttt{tf.\-compat.\-v1.} APIs in RQ1, but the results come from a synthetic dataset, \tfapis{}.
We want to know whether this phenomenon also occurs in real-world code completion.

\paragraph{Dataset}
We collect real usage of such APIs from GitHub.
For every v1 API in \tfapis{}, we search for \texttt{tf.\-compat.\-v1.\-API(}, and collect 46,785 Python files from GitHub.
We deduplicate these files and split each file on the first usage of \texttt{tf.\-compat.\-v1.\-} in code.
We then remove all files that contain \texttt{tf.\-compat.\-v1} in the prefix.
If the suffix does not start with an API call, the file is also removed.
After these cleaning steps, we now obtain 14,237 files.
We use the prefix as the completion prefix and the starting API call in the suffix as the oracle.

We then calculate the length distribution of the prefixes.
We use the tokenizer of CodeLlama 7B Python to tokenize the prefixes because it has a relatively small vocabulary of 32,000 tokens, which makes it easier to overflow the model’s max position after tokenization.
The average length of the prefixes is 1,624.78 tokens, and 90\% of the prefixes have less than or equal to 3,463.0 tokens.
To make the dataset friendlier to models with smaller context windows, we remove files whose prefix is longer than 4,096 - 512 = 3,584 tokens.
This gives us 12,883 files.
We then randomly pick 1,000 files as the \tfreal{} dataset.

\paragraph{Results}
As shown in Table~\ref{tab:tf1-apis-real},
\textbf{\method{} performs significantly better than constrained decoding for all models, with an improvement up to \tfrealimprove{}.}

Take Qwen2.5 Coder 7B as an example.
Qwen2.5 Coder 7B can already correctly generate 21.70\%.
With a constrainer to eliminate invalid options, the EM@1 can be further improved to 35.60\%, indicating that the model is generating many invalid APIs when not constrained.
With \method{}, this result can be further improved to 49.30\%, an improvement of 38.48\% compared to the result of constrained decoding, indicating that in constrained decoding, the model is forced to generate many suboptimal options that are misaligned with the model’s output intent.
The improvement is not as significant as in the \tfapis{} dataset. 
Nevertheless, 49.30\% is still a large improvement.

Similar analysis can be applied to the results of other models, while the largest improvement, \textbf{\tfrealimprove{}} compared to constrained decoding, happens in the results of StarCoder2 7B at EM@1, from 29.80\% to 41.40\%.

\subsection{RQ3: Performance of \method{} in general code generation}

\begin{table*}[t]
\centering
\caption{Pass@$k$ on general-purpose code generation benchmarks.}
\begin{tabular}{l|rrrrr|rrrrr}
\hline
& \multicolumn{5}{c|}{HumanEval} & \multicolumn{5}{c}{MBPP}\\
Method & pass@1 & pass@3 & pass@5 & pass@10 & pass@20 & pass@1 & pass@3 & pass@5 & pass@10 & pass@20\\
\hline
\textit{Qwen2.5 Coder 7B} & & & & & & & \\
Unconstrained decoding & 65.41\% & 63.05\% & 73.66\% & 84.36\% & 90.57\% & 66.92\% & 72.28\% & 79.10\% & 85.04\% & 88.72\% \\
Constrained decoding & 64.15\% & 70.66\% & 79.00\% & 86.08\% & 89.31\% & 66.67\% & 75.13\% & 80.60\% & 85.56\% & 88.46\% \\
\method{} (Ours) & \textbf{69.18\%} & \textbf{72.22\%} & \textbf{79.98\%} & \textbf{86.65\%} & \textbf{92.45\%} & \textbf{70.77\%} & \textbf{76.26\%} & \textbf{81.55\%} & \textbf{86.59\%} & \textbf{89.74\%} \\
\hline
\textit{DeepSeek Coder Base 6.7B} & & & & & & & \\
Unconstrained decoding & 44.65\% & 46.52\% & 56.26\% & 68.23\% & 77.36\% & 36.41\% & 65.62\% & 73.77\% & 81.17\% & 85.38\% \\
Constrained decoding & 47.17\% & 51.88\% & 60.36\% & 70.00\% & 78.62\% & 60.00\% & 67.87\% & 74.26\% & 80.24\% & 84.10\% \\
\method{} (Ours) & \textbf{49.69\%} & \textbf{55.43\%} & \textbf{63.84\%} & \textbf{73.16\%} & \textbf{80.50\%} &  \textbf{63.85\%} & \textbf{70.29\%} & \textbf{76.54\%} & \textbf{82.26\%} & \textbf{85.90\%} \\
\hline
\textit{StarCoder2 7B} & & & & & & & \\
Unconstrained decoding & \textbf{37.74\%} & 28.00\% & 35.79\% & 46.85\% & 57.86\% & \textbf{56.67\%} & 45.94\% & 55.72\% & 66.36\% & 74.36\% \\
Constrained decoding & 37.11\% & 37.67\% & 46.35\% & 58.18\% & 68.55\% & 52.05\% & 57.43\% & 65.09\% & 73.17\% & 78.72\% \\
\method{} (Ours) & \textbf{37.74\%} & \textbf{40.72\%} & \textbf{49.62\%} & \textbf{60.75\%} & \textbf{69.18\%} & 54.36\% & \textbf{58.60\%} & \textbf{66.17\%} & \textbf{74.33\%} & \textbf{80.51\%} \\
\hline
\textit{CodeLlama 7B} & & & & & & & \\
Unconstrained decoding & 33.96\% & 32.15\% & 39.50\% & 49.89\% & 61.64\% & \textbf{48.72\%} & 45.68\% & 54.57\% & 64.42\% & 71.54\% \\
Constrained decoding & 34.59\% & 35.19\% & 42.62\% & 53.43\% & \textbf{65.41\%} & 45.90\% & 52.25\% & 60.19\% & 68.89\% & 75.64\% \\
\method{} (Ours) & \textbf{36.48\%} & \textbf{35.89\%} & \textbf{43.41\%} & \textbf{53.89\%} & 64.15\% & 47.69\% & \textbf{53.26\%} & \textbf{61.25\%} & \textbf{70.34\%} & \textbf{77.18\%} \\
\hline
\end{tabular}
\label{tab:codegen}
\end{table*}

In RQ1 and RQ2, we conduct experiments on two API completion datasets, \tfapis{} and \tfreal{}.
Now we want to know whether \method{} is effective in general code generation.

\paragraph{Dataset}
We adopt the widely used HumanEval \cite{codex} and MBPP \cite{mbpp} benchmarks.
We use a type constrainer \cite{type-constraint} to ensure the generated programs are well-typed.
Because the constrainer is only available for programs written in TypeScript, we use the TypeScript-translated versions from the MultiPL-E dataset \cite{multipl-e}.

\paragraph{Metrics}
We report pass@$k$ \cite{codex}, which is given by replaces $c$ in equation~\ref{eq:em_at_k} with how many samples can pass the test.

\paragraph{Sampling strategy}
It is too time-consuming to check the validity of the whole vocabulary after every prefix.
We introduce a variant of top-$p$ (nucleus) sampling \cite{top-p} to reduce the tokens to be checked.
We check the vocabulary from the most possible token to the least possible one.
After calculating the validity of each token, we calculate the total probability of known valid tokens, and its ratio to the total probability of possibly valid tokens (i.e. known valid tokens + unchecked tokens).
Once this ratio exceeds $p$, all remaining unchecked tokens are considered invalid.
In unconstrained decoding, this sampling strategy naturally becomes the classic top-$p$ sampling, because the constrainer now considers every token to be valid.

\paragraph{Implementation}
We modify the constrainer to fix several bugs and speed up the calculation.
We open-source the modified package together with our source code.

We restart generation when the constrainer cannot determine validity of one token in 60 seconds.
Since greedy sampling always returns the same answer for the same problem, if greedy sampling is restarted 20 times for the same problem, the sample is considered empty.

\paragraph{Results}
As shown in Table~\ref{tab:codegen}, \textbf{\method{} almost always outperforms constrained decoding, with an improvement of pass@1 up to \humanevalimprove{} on HumanEval and \mbppimprove{} on MBPP, compared to constrained decoding},
indicating that \textbf{distortion also widely happens to constrained decoding in general code generation}, which can be fixed by \method{} to improve the performance.
The only outlier is pass@20 of CodeLlama 7B on HumanEval.

When compared to unconstrained decoding, \method{} sometimes performs worse (pass@1 of StarCoder2 7B and CodeLlama 7B on MBPP).
This is due to \textbf{the constrainer's limitation}. \textbf{}
Note that pass@1 is calculated from the greedy samples.
If the greedy samples of uncontrained decoding can pass the test, it should pass the constrainer and be greedily sampled by constrained decoding.
However, in these cases, unconstrained decoding outperforms constrained one.
Nevertheless, \method{} can still improve the performance of constrained decoding by fixing the distribution distortion.

\subsection{RQ4: Alignment between \method{} and the oracle distribution}

We have conducted experiments on both API completion and code generation.
Now we want to verify the statement that the distribution produced by \method{} aligns better with the oracle distribution, as an explanation of the previous results.
Some previous work \cite{gad} also focuses on fixing the language model’s probability in constrained decoding, so we follow their work and duplicate their experiments on \method{} in RQ4.

\paragraph{Datasets}

Following previous work \cite{gad}, we use 4 datasets in domain-specific languages.

The SLIA (strings with linear integer arithmetic) dataset and the INV-BV (loop invariant generation with bit-vector arithmetic) dataset each contain 15 problems from the Syntax-Guided Synthesis (SyGuS) problems \cite{sygus}.
Given the grammar and specification of a domain-specific language, the goal is to generate a function that satisfies the specification using the given grammar.
Each problem comes with its own grammar and logical specification, provided to the model in the prompt.
Each prompt also contains 3 in-context examples in the form of (Problem, Solution) pairs.

The CP (constituency parsing) dataset contains 6 problems of constituency parsing\cite{cp}, i.e., to generate the constituency parse trees from English sentences.
The constrains are used to make sure that the parentheses in the generated sequence are well-paired.

The binary dataset \cite{gad} is specifically designed to demonstrate this problem.
There is only one problem in this dataset that asks the model to generate a random sequence of 5 bits, but the constraint only allows \texttt{00000} and any 5-bit string that starts with \texttt{1}.
For a correctly constrained decoding method, the probability to generate \texttt{00000} should be roughly $1/17$, because there are $17$ valid samples, and they should all have the same probability.
But constrained decoding will give \texttt{00000} a much higher probability, roughly $1/2$, because in the first step, there are only 2 valid options, i.e., \texttt{0} and \texttt{1}, and their probabilities are roughly the same.

\paragraph{Metrics}
Following previous work\cite{gad}, for each problem in the SLIA, INV-BV and CP datasets, we generate 2000 samples; for the binary dataset, we generate 100 samples.
We use as metrics the Kullback–Leibler (KL) divergence between the distribution formed by the samples and the language model distribution.
\textbf{Lower the KL divergence, the closer the two distributions.}
For any distribution $Q$ that only contains constrained samples, the KL divergence between $Q$ and $\Pof{LM}$ reflects the KL divergence between $Q$ and $P(\cdot | c)$.
\begin{multline*}
    D_{\text{KL}}(Q \| P(\cdot | c)) = \sum_s Q(s) \log \frac{Q(s)}{P(s | c)}
    = \sum_s Q(s) \log \frac{Q(s)}{\Pof{LM}(s) Z}, \\
    = \sum_s Q(s) \log \frac{Q(s)}{\Pof{LM}(s)} - \log Z
    = D_{\text{KL}}(Q \| P_{\text{LM}}) - \log Z,
\end{multline*}
where $Z$ is the normalizing constant in $P(\cdot | c)$.

It should be noted that these metrics are only meaningful when compared with results on the same problem because an identical unknown constant $\log Z$ has been added to all of them.

For each question, we calculate the average KL divergence between all the samples and the language model.
We also present the average KL divergence of the four datasets, weighted by the number of problems in each dataset.

To see how each method converges to the oracle distribution, we also calculate the KL divergence between a sliding window of $n$ samples and the language model, where $n$ is one-fourth of the sampling size for each problem.

\paragraph{Baselines}
We compare with two baselines, constrained decoding (CD) and Adaptive Sampling with Approximate Expected Futures (ASAp) \cite{gad}.
ASAp can be regarded as a variant of \method{}, where after sampling a sequence from $P_Q$, instead of backtracking and adjusting all the previous sampling results, it just continues to use the current partially generated prefix.
It uses the information collected in $Q$ only in the later samples on the same problem.
To match its behavior, we also share the estimation data $Q$ between different generations; only the sample data $N$ is updated for a new sample.

\paragraph{Implementation}
We reuse ASAp code to replicate their settings and the same transformers-CFG\cite{cp} library for the constrainer.

\paragraph{Models}
Following previous work\cite{gad}, we use the Mistral-7B\cite{mistral7b} model for all decoding algorithms.

\begin{table}[h]
\centering
\caption{Comparison of KL divergence of different methods on DSL datasets. CD means constrained decoding.}
\begin{tabular}{l|rrrr|r}
\hline
& \multicolumn{5}{c}{KL-divergence ($\downarrow$)}\\
Method & SLIA & INV-BV & CP & binary & average\\
\hline
CD & 11.37 & 7.13 & 19.86 & 23.05 & 11.46 \\
ASAp & 12.05 & 7.76 & 19.85 & 19.87 & 11.90 \\
\method{} & \textbf{9.31} & \textbf{7.04} & \textbf{16.86} & \textbf{19.53} & \textbf{9.97} \\
\hline
\end{tabular}
\label{tab:kl}
\end{table}

\paragraph{Results}
As shown in Table~\ref{tab:kl},
\textbf{in all four datasets, \method{} performs significantly better than both constrained decoding and ASAp.}
In both the SLIA and CP datasets, constrained decoding and ASAp perform roughly the same, while \method{} has a large advantage.
In the INV-BV dataset, all three methods perform roughly the same.
This might indicate that the portion of illegal tokens on different tokens in this dataset is roughly the same, so constrained decoding can perform roughly the same as \method{}.
The binary dataset is the only dataset where ASAp performs much better than constrained decoding, while \method{} still has some advantage over ASAp.

\begin{figure}[h]
\centering
\begin{subfigure}{.24\linewidth}
  \centering
  \includegraphics[width=\linewidth]{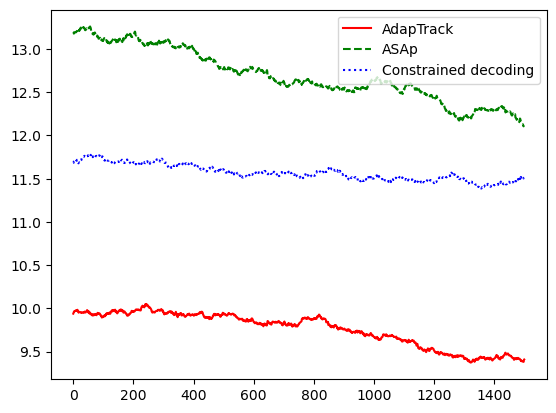}
  \caption{SLIA}
  \label{fig:slia}
\end{subfigure}\hfill
\begin{subfigure}{.24\linewidth}
  \centering
  \includegraphics[width=\linewidth]{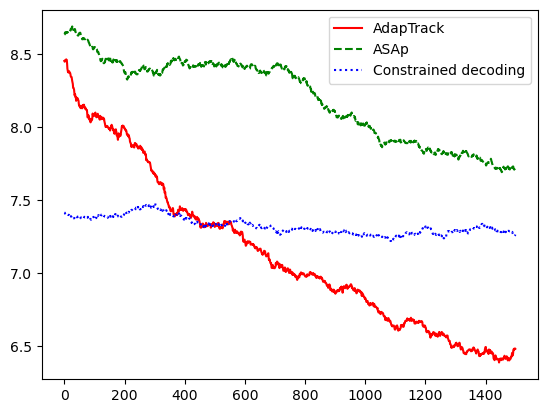}
  \caption{INV-BV}
  \label{fig:bv4}
\end{subfigure}\hfill
\begin{subfigure}{.24\linewidth}
  \centering
  \includegraphics[width=\linewidth]{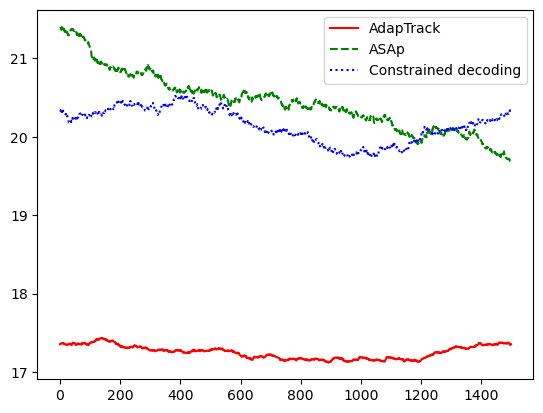}
  \caption{CP}
  \label{fig:cp}
\end{subfigure}\hfill
\begin{subfigure}{.24\linewidth}
  \centering
  \includegraphics[width=\linewidth]{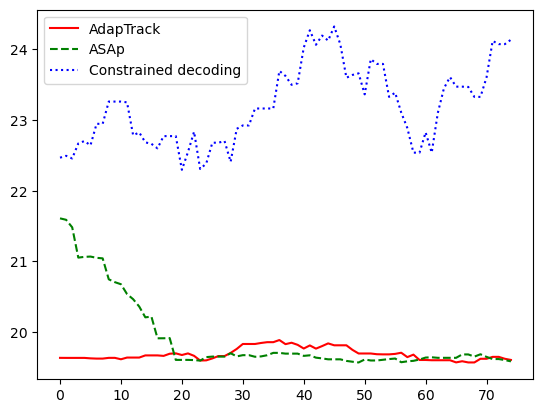}
  \caption{Binary}
  \label{fig:binary}
\end{subfigure}
\caption{The change of KL divergence ($\downarrow$) between every 500 samples and the oracle distribution on different datasets.}
\label{fig:kls}
\end{figure}

Figure~\ref{fig:kls} shows how the KL divergence evolves in repeated sampling.
The point with the abscissa $x$ in the figure represents the KL divergence between the distribution formed by the $x$th to the $(n+x)$th samples and the model distribution.

\textbf{With more and more information stored in $Q$, decoding results of \method{} become better and better.}
On the SLIA, CP and binary dataset, \method{} has a low KL divergence from the start and continues to be low for the entire generation, indicating that \method{} has already found a good distribution from the first one-fourth samples.
On the INV-BV dataset, the KL divergence of \method{} declines greatly, indicating that \method{} is still learning new things in repeated sampling.

\section{Discussion}
\label{sec:discuss}

\subsection{Robustness analysis}

\subsubsection{Temperature}
Performance of decoding method often depends on the sampling temperature.
In previous RQs, the temperature is always 1.
Therefore, we conduct experiments to see how the performance of \method{} changes with different sampling temperatures, i.e. 0.1, 0.2, 0.3, 0.4, 0.5, 0.6, 0.7, 0.8, 0.9.
The experiments are conducted with best-performing models on \tfapis{} and \tfreal{}, i.e. CodeLlama Python 7B and Qwen2.5 Coder 7B, respecitively.

\begin{figure}[h]
\centering
\begin{subfigure}{.33\linewidth}
  \centering
  \includegraphics[width=\linewidth]{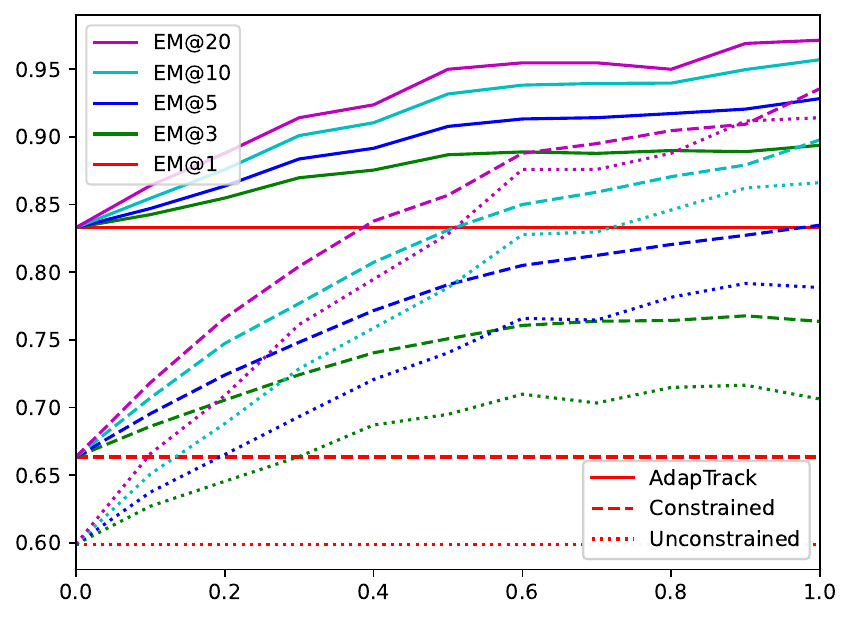}
  \caption{\tfapis{} (v1 setting)}
  \label{fig:temp:tfv1-in-v1}
\end{subfigure}\hfill
\begin{subfigure}{.33\linewidth}
  \centering
  \includegraphics[width=\linewidth]{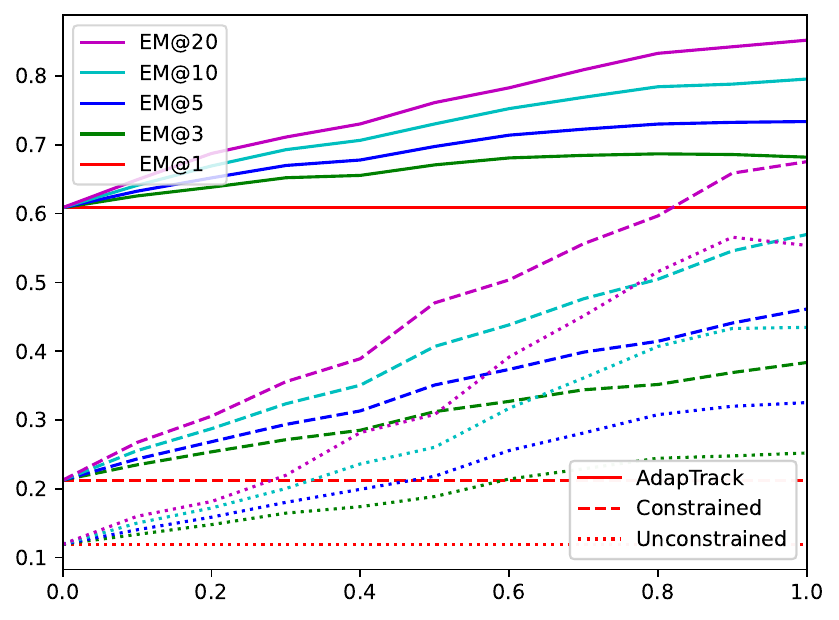}
  \caption{\tfapis{} (v2 setting)}
  \label{fig:temp:tfv1-in-v2}
\end{subfigure}\hfill
\begin{subfigure}{.33\linewidth}
  \centering
  \includegraphics[width=\linewidth]{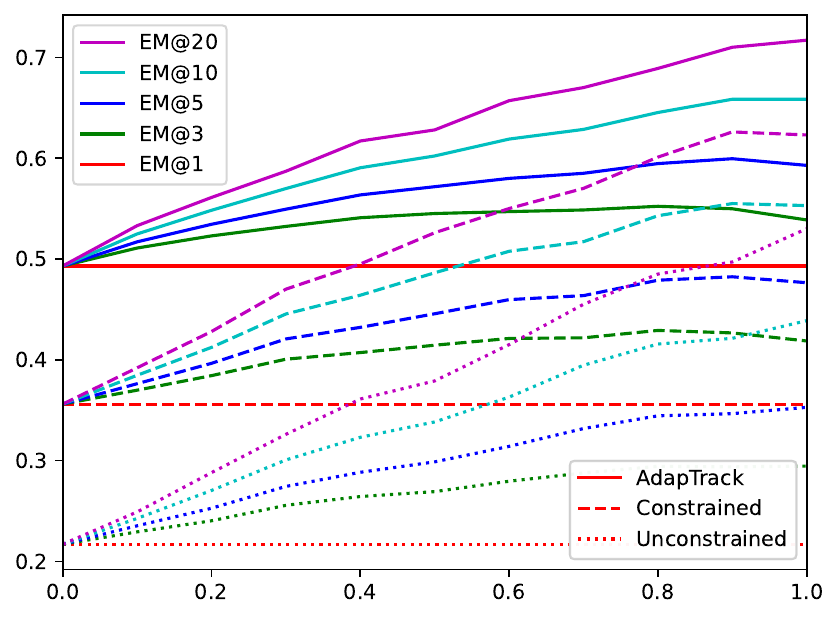}
  \caption{\tfreal{}}
  \label{fig:temp:tfv1real}
\end{subfigure}
\caption{The change of EM@$k$ in different temperature.}
\label{fig:temperature}
\end{figure}

As shown in figure~\ref{fig:temperature}, under temperatures from 0.0 to 1.0, for EM@$k$ where $k \in \{1, 3, 5, 10, 20\}$, \method{} always outperforms both constrained and unconstrained decoding.
As the temperature increases, the performance of all methods increases, because appropriately higher temperature encourages the model to explore answers in a larger range, and EM@$k$ is 1 if any of the $k$ attempts hits the correct answer, except for EM@1, where greedy sampling always gives the same result.

\subsubsection{Model size}
Performance of decoding method often depends on the model size.
Some methods only work with small models or large models.
In previous RQs, we always experiment with models sized roughly 7B.
Therefore, we conduct experiments to see how the performance of \method{} changes with differently sized models.
The experiments are conducted with Qwen2.5 Coder series (0.5B, 1.5B, 3B, 7B, 14B, 32B) on \tfapis{} and \tfreal{} datasets, because Qwen2.5 Coder series provide a balanced and various range of choices for model sizes.
To reduce needed GPU memory, for models larger than 7B, we use the \texttt{bfloat16} format during inference.

\begin{figure}[h]
\centering
\begin{subfigure}{.33\linewidth}
  \centering
  \includegraphics[width=\linewidth]{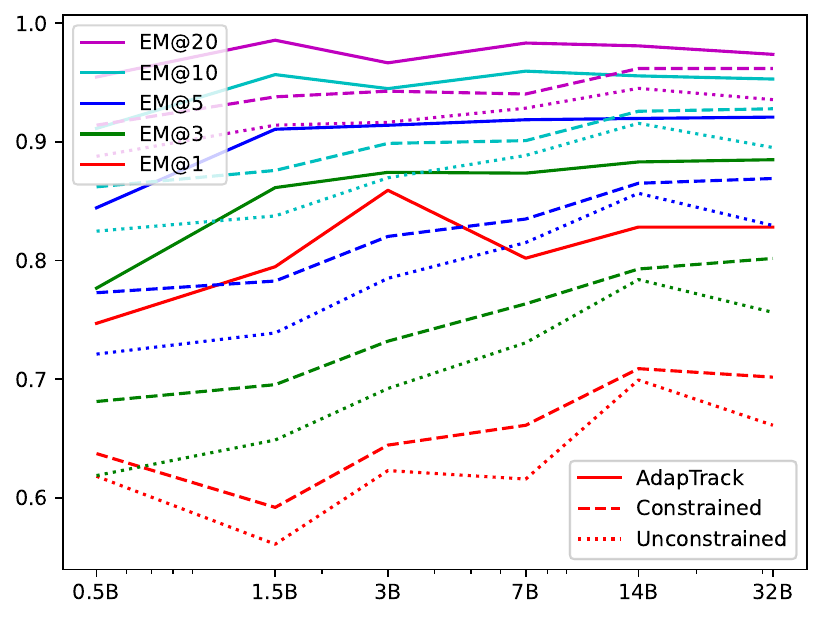}
  \caption{\tfapis{} (v1 setting)}
  \label{fig:size:tfv1-in-v1}
\end{subfigure}\hfill
\begin{subfigure}{.33\linewidth}
  \centering
  \includegraphics[width=\linewidth]{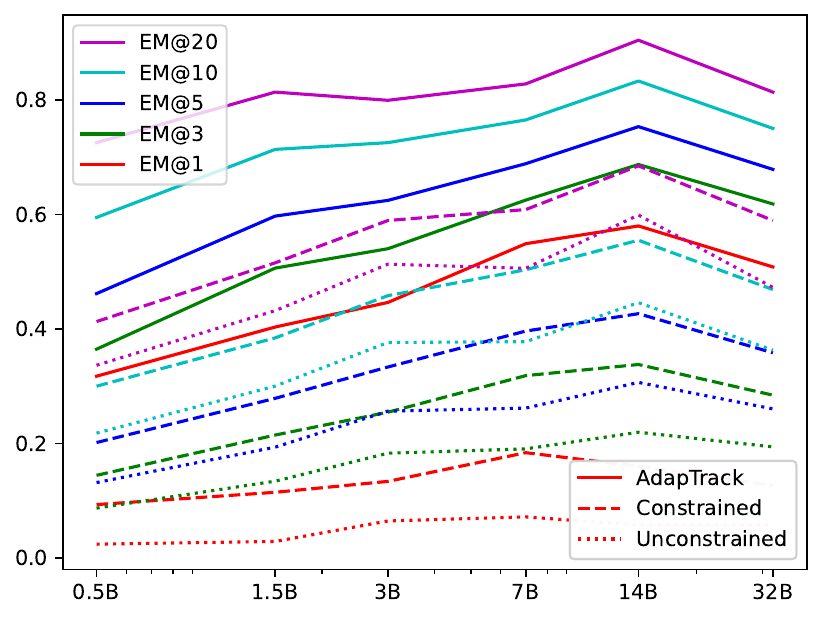}
  \caption{\tfapis{} (v2 setting)}
  \label{fig:size:tfv1-in-v2}
\end{subfigure}\hfill
\begin{subfigure}{.33\linewidth}
  \centering
  \includegraphics[width=\linewidth]{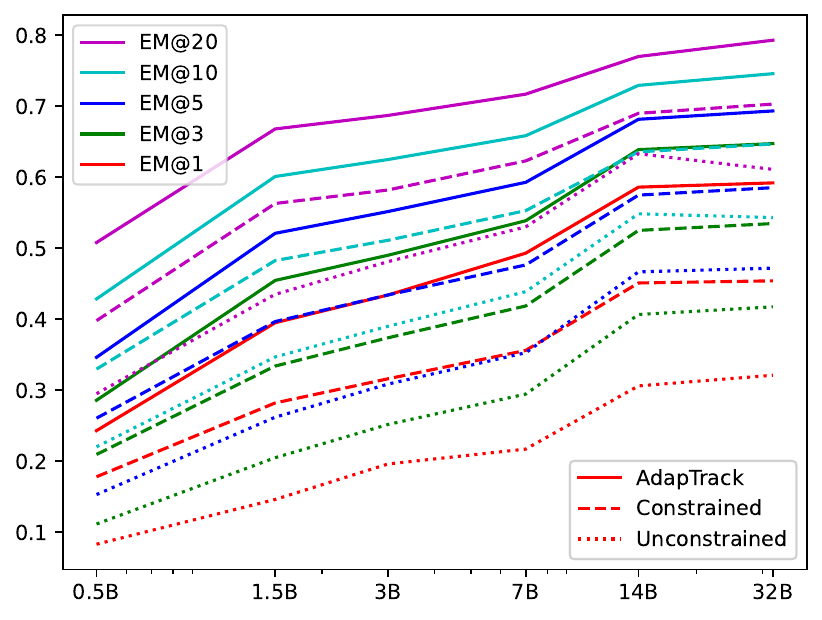}
  \caption{\tfreal{}}
  \label{fig:size:tfv1real}
\end{subfigure}
\caption{The change of EM@$k$ for differently-sized models.}
\label{fig:size}
\end{figure}

As shown in figure~\ref{fig:size}, under all different model sizes, on both \tfapis{} and \tfreal{} datasets, \method{} always outperforms both constrained decoding and unconstrained decoding.
As the model size increases, the performance of all methods on \tfapis{} is saturated or even decreases after a certain threshold.
One possible reason is that \tfapis{} is synthetic and too simple, so the difference in model size does not influence the result.
As the model size increases, all methods' performance on \tfreal{} increases, indicating larger models' better understanding in API completion.

\subsection{Efficiency and scalability of \method{}}
One might concern the efficiency and scalability of \method{} from two perspectives: (1) updating $Q$ and $N$ from step~\ref{step:updateQ} to step~\ref{step:rej:end} in algorithm~\ref{alg:mine} can be inefficient; (2) \method{} may need too much time before generating a complete sequence.

\subsubsection{Parallel update}
The update of $Q$ and $N$ can be parallelized on GPU, rather than performed token by token.
If $Q[x]$ is updated from $1$ to $q$, where $x$ becomes an old prefix during the update, if $x = \textsc{Concat}(s, x')$, following formula can be used to update $Q[s]$,
$$
Q[s] \gets Q[s] - \Pof{LM}(x' | s) (1 - q),
$$
where $\Pof{LM}(x' | s)$ can be computed in parallel for all prefix $s$ of $x$.
It is then natural to compute the updated $P_Q(\cdot | s)$ in parallel, and perform the rejection sampling needed.
It should be noted that the rejection sampling and the sampling after rejection can also be fused into one sampling, further reducing branching in the calculation.

\subsubsection{Number of LM calls}
The most computationally intensive part of the decoding process is the invocation of the language model.
We count the number of calls to the language model per sample in the RQ1 experiments, as shown in Table~\ref{tab:average-length}.
In the TensorFlow v1 setting, \method{} invokes the language model on average 20\% more frequently than constrained decoding, whereas, in the TensorFlow v2 setting, this number rises to an average of 60\% more calls than constrained decoding.
Similar numbers also apply to the averaged maximum number of calls to LM.
Considering that EM@1 of \method{} is often comparable to EM@5 of constrained decoding, this level of increase in the number of invocations can be justified.

\begin{table}[h]
\centering
\caption{Numbers of calls to language model in RQ1.}
\begin{tabular}{l|r|r}
\hline
& \multicolumn{2}{c}{calls to LM ($\downarrow$)} \\
& \tfapis{} (v1 settings) & \tfapis{} (v2 settings) \\
Method & avg (min, max) & avg (min, max) \\
\hline
\textit{Qwen2.5 Coder 7B} & & \\
Unconstrained decoding & 5.85 (4.12, 8.47) & 5.85 (4.12, 8.47) \\
Constrained decoding & 4.64 (2.75, 6.07) & 5.04 (2.91, 7.82) \\
\method{} (Ours) & 5.56 (4.12, 7.42) & 7.94 (4.41, 11.73) \\
\hline
\textit{DeepSeek Coder Base 6.7B} & & \\
Unconstrained decoding & 10.36 (6.32, 15.59) & 10.36 (6.32, 15.59) \\
Constrained decoding & 8.27 (4.93, 11.43) & 8.82 (4.74, 14.40) \\
\method{} (Ours) & 9.85 (6.24, 11.37) & 13.90 (6.70, 22.87) \\
\hline
\textit{StarCoder2 7B} & & \\
Unconstrained decoding & 9.11 (5.75, 13.65) & 9.11 (5.75, 13.65) \\
Constrained decoding & 7.62 (4.47, 10.03) & 7.70 (4.33, 12.50) \\
\method{} (Ours) & 8.83 (5.93, 12.60) & 12.08 (5.47, 20.32) \\
\hline
\textit{CodeLlama Python 7B} & & \\
Unconstrained decoding & 10.68 (7.56, 12.90) & 10.68 (7.56, 12.90) \\
Constrained decoding & 8.41 (5.38, 10.75) & 9.73 (5.88, 14.02) \\
\method{} (Ours) & 10.03 (7.36, 13.45) & 14.29 (8.81, 20.25) \\
\hline
\end{tabular}
\label{tab:average-length}
\end{table}

If theoretical completeness is sacrificed, we can take additional constraints on \method{} to avoid extremely long backtracking distances.
For example, a lower bound can be set for the backtracking probability, prohibiting backtracking if the probability falls below this threshold.
An upper limit can also be set for the backtracking distance, disallowing backtracking beyond this specified maximum.

\begin{figure}[h]
\centering
\begin{subfigure}{.33\linewidth}
  \centering
  \includegraphics[width=\linewidth]{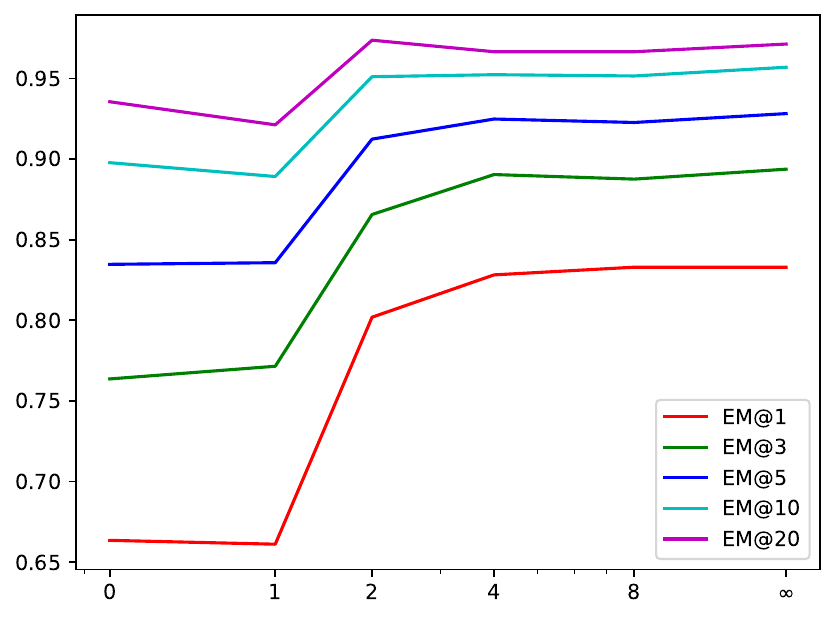}
  \caption{\tfapis{} (v1 setting)}
  \label{fig:length:tfv1-in-v1}
\end{subfigure}\hfill
\begin{subfigure}{.33\linewidth}
  \centering
  \includegraphics[width=\linewidth]{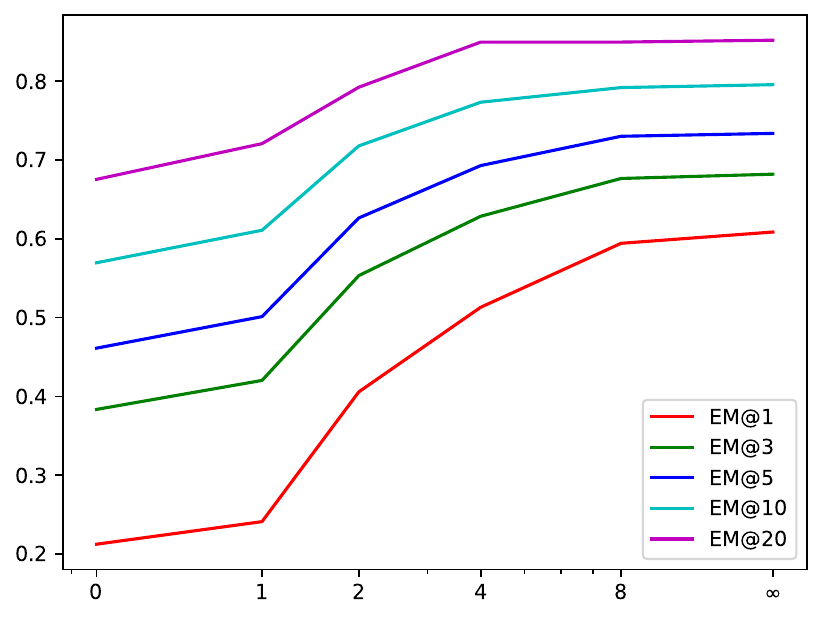}
  \caption{\tfapis{} (v2 setting)}
  \label{fig:length:tfv1-in-v2}
\end{subfigure}\hfill
\begin{subfigure}{.33\linewidth}
  \centering
  \includegraphics[width=\linewidth]{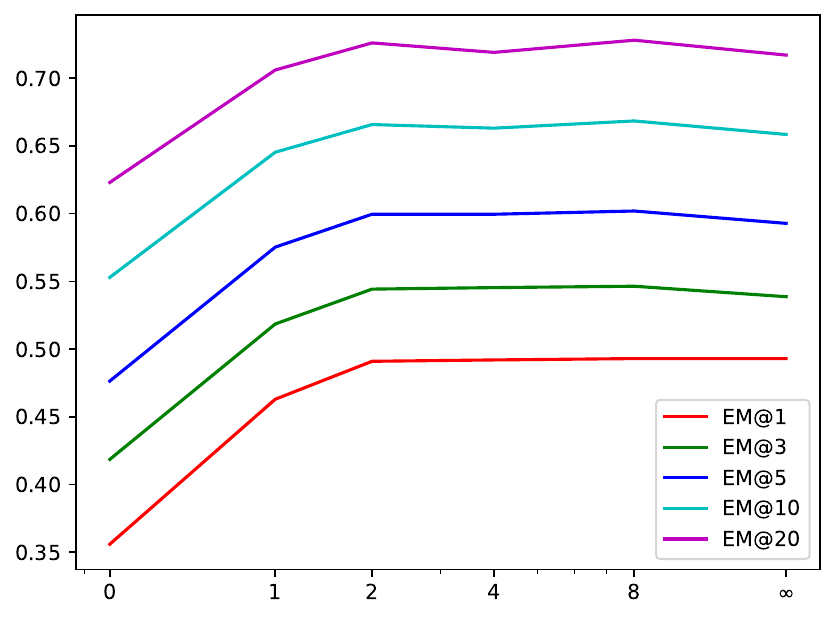}
  \caption{\tfreal{}}
  \label{fig:length:tfv1real}
\end{subfigure}
\caption{The change of EM@$k$ with different maximum backtrack distance.}
\label{fig:exp:lengths}
\end{figure}

In particular, as shown in Figure~\ref{fig:exp:lengths}, we conduct experiments with limited backtracking distances (including 1, 2, 4, 8) with CodeLlama Python 7B on \tfapis{} and Qwen2.5 Coder 7B on \tfreal{}, respectively.
For both models and datasets, the performance increases as the allowed backtracking distance increases, but stops increasing after a certain threshold.
From the figure, we can see that a backtracking distance of 2 or 4 is roughly enough for \tfapis{} (v1 settings), 4 or 8 for \tfapis{} (v2 settings), and 2 for \tfreal{}.
This indicates that a short backtracking distance is enough for most of \method{}'s performance.

\subsection{Threats to validity}

\subsubsection{Dataset selection}
The first threat lies in the selection of dataset: Does this problem of constrained decoding happen only in deprecated APIs or only in API completion?
We mitigate this threat by experiments in RQ1 where we ask the model to generate a TensorFlow v1 API in a TensorFlow v1 setting, where these APIs are not considered deprecated, which can be fixed by \method{} to provide significant improvement.
Experiments on general code generation benchmarks in RQ3 show that such distortion also happens in general code generation, and \method{} can provide performance improvement by fixing such distortion.

\subsubsection{Model selection}
The second threat lies in the selection of the models: Is this issue due to the limitation in model’s capability?
We mitigate this threat by using multiple models in experiments, and using differently-sized models in robustness analysis.
Although they have different capabilities, \method{} shows a consistent improvement.
It should be noted that this is an inherent problem of constrained decoding.
As long as constrained decoding is employed, the problem will persist to some extent.

\section{Related work}
\label{sec:related}

\subsection{API-aware code generation}
APIs evolve over time, and code LLMs need to handle that.
CERT\cite{cert} introduces a sketcher to predict the sketch of API usage, and then uses a generator to use the API in the relevant context.
Toolcoder\cite{toolcoder} introduces an API search tool to help the model search for wanted APIs.
\citeauthor{llm-deprecated-api} \cite{llm-deprecated-api} conduct the first evaluation study on the usage of deprecated APIs in LLM-based code completion.
VersiCode\cite{versicode} dataset focuses on version-specific code completion and version-aware code migration.
CodeUpdateArena\cite{codeupdatearena} dataset constructs synthetic API function
update to evaluate LLM's ability to handle API update.

\subsection{Constrained decoding in code generation}
Constrained decoding is a greedy method to ensure the generated sequence statisfy certain constraints.
Picard \cite{picard} and Synchromesh \cite{synchromesh} introduce incremental parser into code generation to help the model reject inadmissible tokens at each decoding step.
Later grammar-based constrainers \cite{koo2024automatabased,domino,xgrammar,greatgramma} focus on the acceleration of constraint calculation.
Monitor-guided decoding \cite{mgd} and Repilot \cite{repilot} introduce constrained decoding in the generation of general-purpose programming languages, by communicating with static analysis tools in IDEs, e.g., language servers in Visual Studio Code, to collect information about the whole repository to help API generation.
\citeauthor{type-constraint} \cite{type-constraint} introduce type system into constrained decoding, ensuring the generated program to be type-safe.

\subsection{Fixing greedy constrained decoding}
Na\"ive greedy constrained decoding is known to be problematic, and there are some work trying to fix that.
GeDi\cite{gedi} introduces another trained model to model the expected validity of each prefix, but training another model is a large overhead.
ASAp\cite{gad} samples multiple sequences from the language model and use the validity information from previous samples to direct future sampling.
This can be regarded as a variant of \method{} without backtracking.
Gen-C\cite{local-constrain} models the constraint with constraint circuits and resamples using the pseudolikelihood in the neighborhood of a sentence, but constraint circuits are only constructible for regular expressions and, therefore, unsuitable for code generation.

\subsection{Non-linear code generation techniques}
Non-linear code generation techniques is not unheard of.
PG-TD \cite{pg-td} uses Monte-Carlo tree search, and executes public test cases to determine the quality of generated code.
Rocode \cite{rocode} detects errors after generating every statement by running static analysis tools and executing public test cases, and rollbacks to the statement of the error report or earlier statements with high generation uncertainty.
However, these methods are not designed to address distribution distortion and consequently still produce distorted distributions, while \method{} specifically targets this issue and provides provably undistorted results.
These methods often use public test cases to filter out incorrect programs, but, while possible, similar procedure is not executed in \method{}, because \method{} focuses more on the distribution distortion problem.

\section{Conclusion and future work}
\label{sec:conclude}

In this work, we 
propose a new decoding method, \method{}, to fix the distribution distortion problem in constrained decoding by introducing backtracking that is adaptive to the invalidated probability mass into the generation process.
We collect two datasets, the synthetic \tfapis{} dataset and the real-world \tfreal{} dataset to verify the effectiveness of \method{}.
Experiments on general code generation benchmarks verifies the generalizability of \method{}.
We give theoretical proof of our method that it does follow the output intent of the model and conduct experiments on DSL
problems to verify that.

In the future, we will explore this phenomenon on more APIs, and explore how to help the model align better with the constraints.

\begin{acks}
This research is supported by the National Key R\&D Program under Grant No. 2023YFB4503801, the National Natural Science Foundation of China under Grant No. 62192733, 62192730, 62192731, and the Major Program (JD) of Hubei Province (No.2023BAA024).
\end{acks}

\appendix

\bibliographystyle{ACM-Reference-Format}
\bibliography{mybib}

\end{document}